\newtheorem{theorem}{Theorem}
\newtheorem{lemma}[theorem]{Lemma}
\newtheorem{definition}[theorem]{Definition}
\newtheorem{remark}[theorem]{Remark}
\newtheorem{proposition}[theorem]{Proposition}
\newtheorem{problem}[theorem]{Problem}
\newtheorem{assumption}[theorem]{Assumption}
\newacro{DGR}{Data-Guided Regulation}
\newacro{DG-RAN}{Data-Guided Regulation for Adaptive Nonlinear Control}
\newacro{F-DGR}{Fast Data-Guided Regulation}
\newacro{SVD}{Singular Value Decomposition}
\newacro{PBH}{Popov-Belevitch-Hautus}
\newacro{LQR}{Linear Quadratic Regulator}
\newacro{PI}{Policy Iteration}
\newacro{RLS}{Recursive Least Squares}
\newacro{sysID}{system identification}
\newacro{LMI}{Linear Matrix Inequalities}
\newacro{LTI}{Linear Time-Invariant}
\newacro{MPC}{Model Predictive Control}
\newacro{e-ISS}{exponentially input-to-state stable}
\newacro{uccm}{Unmatched Control Contraction Metric}
\newacro{uclf}{Unmatched Control Lyapunov function}
\newacro{RL}{Reinforcement Learning}
\newacro{clf}{control Lyapunov function}
\newacro{DMI}{Differential Matrix Inequality}
\newcommand\x{{\bm x}}
\def\u{{\bm u}}
\def\w{{\bm w}}
\def\z{{\bm z}}
\def\v{{\bm v}}
\def\tK{{\mathbf{\tilde{K}}}}
\newcommand\ogamma{{\bm \gamma}}
\newcommand{\RomanNumeralCaps}[1]
    {\MakeUppercase{\romannumeral #1}}
\title{Data-Guided Regulator for Adaptive Nonlinear Control\footnote{The research of the first author has been partially supported by the UW+Amazon Science Hub Fellowship. This research has also been supported by NSF grant ECCS-2149470.}}
\author{Niyousha Rahimi\footnote{Ph.D. Candidate, AIAA Student Member, nrahimi@uw.edu} and   Mehran Mesbahi\footnote{Professor, AIAA Associate Fellow, mesbahi@uw.edu}}
\affil{Dept. of Aeronautics \& Astronautics, University of Washington, Seattle, WA 98195, USA}
\begin{document}

\maketitle

\begin{abstract}
    This paper addresses the problem of designing a data-driven feedback controller for complex nonlinear dynamical systems in the presence of  time-varying disturbances with unknown dynamics. Such disturbances are modeled as the ``unknown'' part of the system dynamics. The goal is to achieve finite-time regulation of system states through direct policy updates while also generating informative data that can subsequently be used for data-driven stabilization or system identification. 
    First, we expand upon the notion of ``regularizability'' and characterize this system characteristic for a linear time-varying representation of the nonlinear system with locally-bounded higher-order terms. ``Rapid-regularizability'' then gauges the extent by which a system can be regulated in finite time, in contrast to its asymptotic behavior.
    
    We then propose the \acf{DG-RAN} algorithm, an online iterative synthesis procedure that utilizes discrete time-series data from a single trajectory for regulating system states and identifying disturbance dynamics. The effectiveness of our approach is demonstrated on a 6-DOF power descent guidance problem in the presence of adverse environmental disturbances.
    
\end{abstract}

\section{Nomenclature}

{\renewcommand\arraystretch{1.0}
\noindent\begin{longtable*}{@{}l @{\quad=\quad} l@{}}
\multicolumn{2}{@{}l}{Notation}\\
$\mathbb{R}$, $\mathbb{C}$ & sets of real and complex numbers \\
$\mathbb{R}^n$, $\mathbb{R}^{n \times m}$ & $n$-dimensional Euclidean space and real $n\times m$ matrix \\
$\mathbbm{1}$, $0_{n\times m}$ & $n \times 1$ vector of all ones and $n\times n$ matrix of zeros \\ 
 $\mathrm{I}$, $\mathrm{I}_n$ & $n\times n$ identity matrix \\
$\lfloor a \rfloor$ & The floor function of a real number $a\in \mathbb{R}$\\
$L\succ 0$, $L\succeq 0$ & real positive-definite (PD) and real positive-semidefinite (PSD) matrices \\
$M^\top$ & transpose of a real matrix $M\in \mathbb{R}^{n \times m}$  \\
$\mathcal{R}(M) \subseteq \mathbb{R}^n$ & range of a real matrix $M \in \mathbb{R}^{n \times m}$ \\
$\mathcal{N}(M)\subseteq \mathbb{R}^m$ & null-space of a real matrix $M \in \mathbb{R}^{n \times m}$\\
$\mathrm{rank} (M)$ & dimension of $\mathcal{R}(M)$ \\
$\mathrm{span}\{.\}$ & the span of a set of vectors over the complex field\\
$M= {U} \Sigma {V}^\intercal$ & the \ac{SVD} of a matrix $M \in \mathbb{R}^{n\times m}$ \\
$U\in \mathbb{R}^{n\times n}$ & unitary matrix consists of left  ``singular" vectors of $M$\\
$V\in \mathbb{R}^{m\times m}$  & unitary matrix consists of right ``singular" vectors of $M$\\
$M = U_r \Sigma_r V_r^\intercal$ & the thin \ac{SVD} of $M$, where $r = \mathrm{rank}(M)$ and $\Sigma_r \in \mathbb{R}^{r \times r}$ is now nonsingular.\\
$U_r\in \mathbb{R}^{n\times r}$  & reduced order matrix obtained by truncating the factored $U\in \mathbb{R}^{n\times n}$ in the \ac{SVD} to the first $r$ columns\\
$V_r\in \mathbb{R}^{r\times m}$ & reduced order matrix obtained by truncating the factored $V\in \mathbb{R}^{m\times m}$ in the \ac{SVD} to the first $r$ columns\\
$M^{\dagger}={V}\Sigma^{\dagger}{U}^\intercal$ & Moore-Penrose generalized inverse ---\textit{pseudoinverse} for short--- of $M$ \\
$\rho(A)$ & the spectral radius of a square matrix $A \in \mathbb{R}^{n \times n}$, i.e., maximum modulus of eigenvalues of $A$\\
$\rho(A) \leq 0$ & denotes stability of the square matrix $A \in \mathbb{R}^{n\times n}$\\ $~\Pi_{S}(\bm v)$ & orthogonal projection of a vector $\bm v$ on a linear subspace $S$ \footnote{We will be working with finite dimensional vector spaces and as such all subspaces are closed.}   \\
$~\Pi_{S} = U U^\intercal$ & when the columns of matrix $U\in \mathbb{R}^{n \times k}$ form an orthonormal basis for the linear subspace $S$ \\
$\| \bm v \|$ & {Euclidean norm} of a vector $\bm v \in\mathbb{R}^n$ \\
$\|w(\cdot)\|_{\infty}$ &  $\sup\limits_{t\geq 0} \|w(t)\|$, infinity norm of a continuous signal $w(t)$\\
$\|w(\cdot)\|_{\Delta t}$ & $\sup\limits_{t_f \geq t\geq t_0} \|w(t)\|$, infinity norm of a continuous signal $w(t)$ for $t\in[t_0,t_f]$
\end{longtable*}}

\section{INTRODUCTION}
\label{sec:intro}
A critical aspect of autonomous operations in safety-critical scenarios is learning from available data for quick adaptation to new environments while maintaining safety. Examples include aircraft emergency landing scenarios in adverse weather conditions and agile quadrotor flights through low clearance gates in the presence of dynamic and strong wind conditions \cite{o2022neural}. 
From a system theoretic perspective, this system feature maps to having the autonomous agent  handle parametric model uncertainties and disturbances with control-theoretic guarantees such as stability and tracking error convergence, common in adaptive control settings~\cite{slotine1991applied,aastrom2013adaptive}. 
A rich body of literature has analyzed classical adaptive control algorithms' stability and convergence properties for continuous-time dynamical systems. Such studies include the use of PI (proportional integral) controllers \cite{accikmecse2003robust} for a class of linear time-varying systems to guarantee (\RomanNumeralCaps{1}) \textbf{infinite-time} convergence of the tracking error to zero, i.e., the difference between actual and nominal states $e(t) = x(t) - \bar{x}(t)$, for any constant exogenous disturbance (denoted by $w$), (\RomanNumeralCaps{2}) \textbf{infinite-time} convergence of the tracking error $e(t)$ to a bound which is proportional to the bound on the magnitude of the rate of the exogenous signal $\dot{w}(t)$. Furthermore, initial work in adaptive control of nonlinear systems used Lyapunov-like stability arguments and the certainty equivalence principle to construct stabilizing adaptive feedback
policies for feedback linearized systems with matched uncertainties \cite{slotine1986adaptive, slotine1987adaptive, taylor1988adaptive, sastry1989adaptive, kanellakopoulos1991extended}, i.e., those that can be directly canceled through control. However, the certainty equivalence principle cannot be employed when the model uncertainties are outside
the span of the control input, i.e., unmatched uncertainty. Departure from certainty equivalence significantly complicates the design process for a nonlinear system, as the controller must either anticipate or be robust to transients in the parameter estimates.
Only recently, the work~\cite{talebi2021regularizability} has studied the input “effectiveness” as it relates to \textbf{finite-time} regulation for a class of partially unknown linear systems, where the unknown part of the system can be modeled as unmatched uncertainty.

Many recent studies aim to integrate ideas from learning, optimization, and control theory to design and analyze adaptive controllers using learning-theoretic metrics. 
Precisely, due to recent successes of \ac{RL} in the control of physical systems, \cite{yang2020reinforcement,hwangbo2019learning,williams2017information}, there has been a surge of research activities in online \ac{RL} algorithms for continuous control. In contrast to the classical setting of adaptive nonlinear control, online \ac{RL} algorithms operate in discrete time and often come with {\em finite-time}~\cite{shi2021meta,boffi2021regret,dean2020sample,lale2020logarithmic,kakade2020information} or dynamic regret bounds \cite{li2019online,yu2020power,lin2021perturbation}. These bounds provide a quantitative rate at which the control performance of the online algorithm approaches the performance of an oracle equipped with hindsight knowledge of the uncertainty. In such scenarios, the controller does not necessarily guarantee the stability or robustness of the synthesized system; rather, the goal is to match the performance of the oracle for a well-behaved dynamical model.

When dealing with uncertain dynamical systems, most of the aforementioned learning-based studies focus on two restricting scenarios. The first category of work focuses on a finite-horizon episodic setting \cite{kakade2020information, lale2020logarithmic,dean2020sample}, which might not be directly applicable to online safety-critical control such as online flight control. The second category of work either assume fully actuated systems \cite{o2022neural} or studies the matched uncertainty setting \cite{boffi2021regret}. 
To the best of our knowledge, the work presented in \cite{lopez2021universal} is most relevant to this paper. The primary contribution of \cite{lopez2021universal} lies in the introduction of a novel direct adaptive control framework that effectively utilizes the certainty equivalence principle. This framework enables the design of stabilizing adaptive controllers for general nonlinear systems with unmatched uncertainties. 

The authors of the aforementioned work achieve the above goals by first defining the unmatched \ac{clf}, which is a family of \ac{clf}’s parameterized over all possible models. Subsequently, the approach proposed in ~\cite{lopez2021universal} dynamically adjusts the adaptation rate of the unknown parameter online to mitigate the influence of estimation transients on stability. However, it is important to note several restrictive assumptions associated with this method. 
Firstly, the proposed adaptive control framework specifically caters to nonlinear systems with constant unknown parameters. Secondly, it relies on the existence of a family of differentiable Lyapunov functions or a smooth manifold $\mathcal{M}$ that describes the contracting region for the system with any parameter estimates at all times $t$. These assumptions impose limitations as the existence of a Lyapunov function for a nonlinear system is generally non-trivial, and the requirement of differentiable Lyapunov functions for any variable estimate is highly restrictive. Additionally, recomputing the adaptive reference model for each new parameter estimate adds complexity and computational overhead. While the work in \cite{lopez2021universal} presents an innovative adaptive control framework, the aforementioned assumptions and limitations restrict its feasibility for certain applications, particularly in dynamic environments where time-varying uncertainties are prevalent.

\subsection{Statement of contribution}
\label{subsec:contributions}

For a complex nonlinear dynamical system with an initial robust controller, we aim to study the “effectiveness” of a data-driven feedback controller as it relates to \textbf{finite-time} regulation of system states in the feasible state space in the presence of unknown time-varying uncertainty. 
The contribution of the proposed work is thereby as follows: (1) we expand the notion of ``regularizability'' introduced in \cite{talebi2021regularizability} for a class of partially unknown nonlinear systems, where we model the time-varying uncertainty as the unknown part of the system dynamics.
(2) By using linear matrix inequalities (LMIs), we characterize rapid-regularizability for a linear time-varying (LTV) representation of the nonlinear system with locally-bounded higher-order (nonlinear) terms. We then clarify how this notion relates to the spectral properties of the underlying LTV system.
(3) In the second part of this paper, we introduce the \acf{DG-RAN} algorithm. This online iterative synthesis procedure utilizes the discrete time-series data of a single trajectory to regulate system states while generating informative data for identifying the dynamics of the disturbance. 

\subsection{Outline}
The rest of the paper is organized as follows. 
We introduce the problem statement in \Cref{sec: Prob-statement}.
The notion of rapid-regularizability for a partially unknown nonlinear system is then introduced in \S\ref{sec:RRN}. The properties of such systems are further studied in the subsequent part of this section. In \S\ref{sec:DG-RAN} the \ac{DG-RAN} algorithm is proposed as the means of online regulation of partially-unknown nonlinear systems.
An illustrative
example is presented in \Cref{sec: experiment} followed by concluding remarks in \Cref{sec: conclusion}.

\section{Problem Statement}
\label{sec: Prob-statement}

In this paper, we examine the control of an uncertain dynamical systems that has been
set to follow a predefined trajectory. In this direction, consider the following uncertain nonlinear dynamical system,
\begin{equation}
    \label{Eq:nonlinear_odd}
    \dot{\x}(t)=f(\x(t), \u(t)) + F\dot{\theta}(t), \quad t \in\left[t_{0}, t_{f}\right] ,
\end{equation}
where $\x(t) \in \mathbb{R}^{n_x}$ is the system states, $\u(t) \in \mathbb{R}^{n_u}$ is the control input, $F\in \mathbb{R}^{n_x\times n_\theta} $ is a known basis function for the environmental disturbance and $\theta(t) \in \mathbb{R}^{n_\theta}$ is an unknown time-varying variable. Using handpicked or learned basis functions, systems with non-parametric or nonlinearly parameterized uncertainties can be transformed into \Cref{Eq:nonlinear_odd}. One example of such a system is an airplane ascending and landing in the presence of unknown wind. We assume the $f(\x(t), \u(t))$ is locally Lipschitz uniformly and at least once differentiable, with measurable state $\x$.
Although the dynamics \Cref{Eq:nonlinear_odd} is continuous, we can only measure system states and control inputs at discrete times. Hence we consider the following assumption for measuring data.
\begin{assumption}
     The system states can be measured at constant (positive) intervals of length $\delta t \in \mathbb{R}$. Hence at time $t$, we have a streaming data set ${\cal D}_t = \{x_i, u_i\}_{i=0}^{n_t}$ of size $n_t = \lfloor t/\delta t \rfloor$.
\end{assumption}

Assume that $\{\bar{\x}(t), \bar{\u}(t)\}_{t=t_{0}}^{t_{f}}$ is the nominal (reference) trajectory that satisfies the dynamics in the absence of uncertainty, for some initial condition $\x_0$, and define,
\begin{subequations}
    \label{Eq:difference-variables-odd}
    \begin{align}
        \eta(t) &\coloneqq \x(t)-\bar{\x}(t), \label{sebEq:eta}\\
        \widetilde {\theta}(t) & \coloneqq  \theta(t) - \hat{\theta}(t), \label{Eq:sub-tilde-theta}\\
         \xi(t) + \z(t, {\cal D}_t) &\coloneqq \u(t)-\bar{\u}(t), 
    \end{align}
\end{subequations}
where $\eta(t)$ is the deviation from the nominal trajectory, $\hat{\theta}(t)$ is the estimate of the variable $\theta(t)$, and $\widetilde {\theta}(t)$ is the estimation error at time $t$. We can rewrite \Cref{Eq:nonlinear_odd} in terms of the variables in \Cref{Eq:difference-variables-odd} by using a first-order Taylor series expansion around the nominal trajectory as
\begin{equation}
\label{Eq:first-order-taylor}
    \begin{aligned}
        \dot{\eta}(t) = A(t) \eta(t) + B(t)& \xi(t)  + \delta g\left(\x(t), \u(t)\right) + B(t) \z(t, {\cal D}_t) + F\dot{\theta}(t),
    \end{aligned}
\end{equation}
where $A(t)$ and $B(t)$ are the partial derivatives of $f$ evaluated along the nominal trajectory, and $\delta g\left(\x(t), \u(t) \right) = g(\x(t),\u(t)) - g(\bar{\x}(t), \bar{\u}(t) )$ represents the higher order (nonlinear) terms. 

\begin{assumption}
    The higher order (nonlinear) term, $g \left(\x(t), \u(t)\right)$, is locally Lipschitz bounded. 
\end{assumption}

The general practice is to design a robust controller (with state denoted by $\xi(t)$) offline by considering the uncertainty that is known a priori \cite{rahimi2022robust}. However, in the online setting, the robust controller might fail when the effects of environmental uncertainties are not captured by the nominal uncertainty. Therefore, one needs an online data-driven approach (with state denoted by $\z(t, {\cal D}_t)$) to update the control policy to satisfy operational constraints.

Precisely, we assume that the offline robust controller is designed using the Lyapunov theory. Suppose that $\mathcal{X}_\mathcal{F} \in \mathbb{R}^{n_x}$ and $\mathcal{U}_\mathcal{F} \in  \mathbb{R}^{n_u}$ are the (possibly nonconvex) sets of feasible state and control vectors. A Lyapunov function can be used to seek out nearby feasible trajectories around the nominal $\{\bar{\x}(t), \bar{\u}(t)\}_{t=t_{0}}^{t_{f}}$. In other words, denote a funnel by 
$$\mathcal{F}(t) \subseteq \mathcal{X}_\mathcal{F} \times \mathcal{U}_\mathcal{F}.$$
The set $\mathcal{F}(t)$ is comprised of time-varying state and control trajectories 
that are invariant and contained inside the respective feasible regions.
%

For the original system dynamic $\dot{\x}(t) = f(\x(t), \u(t))$, one can use the Lyapunov theory to design a robust feedback controller $\xi(t)$, which guarantees the invariance of $\mathcal{F}(t)$; however, such a control policy might fail in the presence of environmental uncertainties, thus motivating our contributions in this work.

\subsection{Funnel Synthesis}
\label{sec:funnel-synthesis}

In this section, we briefly describe the funnel synthesis \cite{Reynolds2021FunnelProblem} approach, which is based on the notion of quadratic stability as defined in \cite{Boyd} and \cite{accikmecse2011robust, accikmes2008stability, d2013incremental}. 
For a fixed reference trajectory $\{\bar{\x}(t), \bar{\u}(t)\}_{t=t_{0}}^{t_{f}}$, the error dynamic for $\dot{\x}(t) = f(\x(t), \u(t))$ can be equivalently expressed using structured nonlinearities as follows
\begin{equation}
    \label{Eq:structured-nonlinearlity-odd}
    \dot{\eta}(t) =A(t) \eta(t)+B(t) \xi(t) + E \delta p(t)
\end{equation}
where $\eta(t)$ is defined in \Cref{sebEq:eta}, $\xi(t) = K(t) \eta(t)$, and the pair $(\delta q(t),\delta p(t)) \in \mathbb{R}^{n_{q}} \times \mathbb{R}^{n_{p}}$ capture the higher order terms through the nonlinear functions $\phi_i : \mathbb{R}^{n_{q_i}} \rightarrow \mathbb{R}^{n_{p_i}}$.
\begin{subequations}
\label{Eq:structure}
    \begin{align}
        \delta q(t) &= q(t) - \bar{q}(t) = \left(H + G K(t)\right) \eta(t) \, , \quad \text{where} \quad \bar{q}(t) = H \bar{x} + G\bar{u}  \\
        \delta p(t) &= \phi\left(t,q(t)\right) - \phi\left(t,\bar{q}(t)\right)  \, ,
    \end{align}
\end{subequations}
The details of this transformation are laid out in \cite{TaylorPatrickReynolds2020ComputationalSystems}.

The goal is to seek the largest possible funnel that satisfies input and state constraints. This allows us to implicitly define a large family of trajectories by using the functions that define the funnel, thereby providing the ability to guarantee the availability of a feasible trajectory over a larger region of parameter variations.
Particularly, consider the scalar-valued function $V: \mathbb{R}^{n_{x}} \rightarrow \mathbb{R}$ defined by
\begin{equation}
V(\eta(t))=\eta(t)^{\intercal} Q(t)^{-1} \eta(t)
\end{equation}
where $Q(t) \in \mathbb{S}_{++}^{n_{x}}$ is a matrix-valued function of time whose range space lies in the set of positive definite matrices. As a result, we have $V(\eta(t))>0$ for all $t \in\left[t_{0}, t_{f}\right]$ whenever $\eta(t) \neq 0$.

Having introduced each of the time-varying terms, we henceforth omit the argument of time ``$t$'' whenever possible. The 1-level set of $V(\eta(t))$ is the set of states that satisfy the quadratic inequality $\eta^{\intercal}  Q^{-1} \eta  \leq 1$, which is also the equation of a non-degenerate $n_{x}$-dimensional ellipsoid. We denote the ellipsoid defined by the positive definite matrix $Q$ and centered at the origin as
\begin{equation}
\label{Eq:ellipsoid-def}
    \mathcal{E}_{Q} \coloneqq \left\{\eta \in \mathbb{R}^{n_{x}} \mid \eta^{\intercal} Q^{-1} \eta \leq 1\right\}
\end{equation}
If $\eta \in \mathcal{E}_Q$, then $C\eta \in \mathcal{E}_{CQC^\intercal}$, a fact that can be proven easily via Schur complements when $C$ is full row-rank\footnote{When $C$ is not full row-rank, the ellipsoid $\mathcal{E}_{CQC^\intercal}$ is a degenerate ellipsoid.}.

We can now formally define a quadratic funnel as described in \cite{Reynolds2021FunnelProblem}.
\begin{definition}
    (Quadratic Funnel). A quadratic funnel, $\mathcal{F}$, is a set in state and control space that is parameterized by a time-varying positive definite matrix $Q(t) \in \mathbb{S}_{++}^{n_{x}}$ and a time-varying matrix $K(t) \in \mathbb{R}^{n_u\times n_x}$. Specifically, we have
    \begin{equation}
        \label{Eq:quadratic-funnel}
        \mathcal{F} = \mathcal{E}_Q \times \mathcal{E}_{KQK^\intercal}, \quad \mathcal{E}_Q \subseteq \mathcal{X}_\mathcal{F}, \quad \mathcal{E}_{KQK^\intercal} \subseteq \mathcal{U}_\mathcal{F}
    \end{equation}
    $K$ is called the correction law associated with the quadratic funnel.
\end{definition}

The authors in \cite{Reynolds2021FunnelProblem} showed that the following optimization renders the maximum feasible quadratic funnel in the state and control space, where a time-varying feedback controller of the form $\xi(t) = K(t) \eta(t)$ guarantees convergence to the nominal trajectory, starting anywhere inside the funnel.

\begin{problem}
\label{Prob:funnel_synthesis}
    (Quadratic Funnel Synthesis \cite{Reynolds2021FunnelProblem}). Given a nominal trajectory $\{\bar{x}(t), \bar{u}(t)\}_{t=t_0}^{t_f}$ that satisfies the nonlinear dynamics $\dot{x}(t) = f(x(t), u(t))$, an appropriate definition of the nonlinearity channels $(\delta q(t), \delta p(t)) \in \mathbb{R}^{n_{q}} \times \mathbb{R}^{n_{p}}$ and $\alpha > 0$, find the matrix-valued functions of time $Q(t), Y(t)$ and $M_\gamma(t)$ and scalar $\lambda(t)$ that solve the following optimization problem $\forall t \in\left[t_0, t_f\right]$.

    \begin{subequations}
        \label{Eq:subeq-ddc-funnel}
         \begin{align}
            \max _{ Q(\cdot), Y(\cdot), \lambda(\cdot), M_\gamma(\cdot)}\hspace{1.3cm} & \log \operatorname{det} Q\left(t_0\right) \tag{\theparentequation}\\
            \textbf{subject to } \hspace{1.8cm} & \nonumber\\
            & 0 \preceq Q \preceq Q_{\max },\; \;  0 \leq \lambda,\; \; M_\gamma \in \mathcal{M}_{\phi, \Omega}, \\
            & \begin{bmatrix}
                \mathbf{Q} {A}^\intercal +  {A}\mathbf{Q}  + T_1 & E+\lambda \mathbf{Q}{C_{cl}}^{\intercal} M_{12} \\
                E^{\intercal}+\lambda M_{12}^{\intercal} {C_{cl}}^{\intercal} \mathbf{Q} & \lambda M_{22}
            \end{bmatrix} \preceq 0, \\
            & \begin{bmatrix}
                Q & Y^{\intercal} \\
                Y & R_{\max }
            \end{bmatrix} \succeq 0,
         \end{align}
    \end{subequations}
    where $\Omega = \mathcal{E}_{C_{cl} Q C_{cl}^{\intercal}}$, $M_\gamma = \text{diag}\left([\gamma^2I, -I]\right)$, $C_{cl} = H+G\mathbf{K}$, $\mathbf{Y} = K \mathbf{Q}$, and $T_1  = -\dot{\mathbf{Q}}  + \alpha \mathbf{Q} + \mathbf{Y}^\intercal B^\intercal + B\mathbf{Y} + \lambda \mathbf{Q}C_{cl}^\intercal M_{11}C_{cl}\mathbf{Q}$. 
\end{problem}

\begin{remark}
    The ellipsoids $\mathcal{E}_{Q_{\max}}$, and $\mathcal{E}_{R_{\max}}$ defined by the positive definite matrices $Q_{\max}\succ 0 $ and $R_{\max}\succ 0 $ are respectively the maximum volume ellipsoids in the feasible state and control space.
\begin{align*}
    \mathcal{E}_{Q_{\max}} \subseteq \mathcal{X}_{\mathcal{F}}, \quad \mathcal{E}_{R_{\max}} \subseteq \mathcal{U}_{\mathcal{F}}
\end{align*}

Furthermore, as defined in \cite{Reynolds2021FunnelProblem}, $\mathcal{M}_{\phi, \Omega}$ is the set of local multiplier matrices for the nonlinear mapping $\phi: \mathbb{R}^{n_{q}} \rightarrow \mathbb{R}^{n_{p}}$ over the set $\Omega$, and any symmetric matrix $M \in \mathbb{S}^{(n_q+n_p)}$ such that $M \in \mathcal{M}_{\phi, \Omega}$, satisfies the following inequality.

\begin{equation}
    \begin{bmatrix}
        q \\
        \phi(q)
    \end{bmatrix}^{\intercal} M \begin{bmatrix}
        q \\
        \phi(q)
    \end{bmatrix} \geq 0, \quad \text { for all } q \in \Omega .
\end{equation}
\end{remark}

\section{Rapidly-Regularizable Nonlinear Systems}
\label{sec:RRN}

In this section, we first introduce the notion of ``rapid-regularizablity'' for a class of nonlinear dynamical systems. We then provide the necessary and sufficient conditions for the corresponding data-driven controller and conclude the section with the main theorem.

 \begin{definition}
 \label{def:rapidly-regularizable}
     Denote $\bar{Q}(t)$ and $K(t)$ for $t \in [t_0, t_f]$ as the solution to Problem \ref{Prob:funnel_synthesis}. The uncertain dynamical system of the form \Cref{Eq:first-order-taylor} is called ``rapidly-regularizable'' if there exists a linear feedback controller $z(t, \theta) = \tK_*(t) \theta(t)$ such that for all $t \in [t_0, t_f]$ the following holds.
     \begin{equation}
         \eta(t) \in \mathcal{E}_{\bar{Q}(t)}, \quad
         \xi(t) + z(t, \theta) \in \mathcal{U}_{\mathcal{F}}
     \end{equation}
     where $\xi(t) = K(t) \eta(t)$.
 \end{definition}

By definition, depending on the dynamics of the variable $\theta(t)$, there should exist a feedback policy $z(t, \theta)$ that further assists the existing robust controller $\xi(t)$ in keeping the system in the invariant state space $\mathcal{E}_{\bar{Q}} \subseteq \mathcal{X}_{\mathcal{F}}$. The following theorem provides a sufficient condition for the nonlinear system of the form \Cref{Eq:first-order-taylor} to be rapidly regularizable\footnote{We will provide additional discussion on the limitations of the control space $\mathcal{U}_{\mathcal{F}}$ in the final manuscript.}.

\begin{theorem}
\label{Thm:RRS}
    Denote $\bar{Q}(t),\; \bar{Y}(t) = \mathbf{K}(t) \bar{Q}(t)$ and $M_\gamma(t)$ for $t \in [t_0, t_f]$ as the solution to Problem \ref{Prob:funnel_synthesis}. Suppose the dynamic of $\theta(t)$ is known and $\dot{\theta}(t) = \tilde{S}\theta(t)$ where $\theta(t=0)=\theta_0$. If there exists a feedback controller of the form $z(t,\theta) = {\tK}_*(t) \theta(t)$, a matrix-valued function of time $\tilde{M}(t)$ and scalars $\lambda(t)$, $\nu(t), \beta(t)$ that for all time $t \in [t_0,t_f]$ solves \Cref{Eq:subeq-RapidR}, then the uncertain dynamical system of the form \Cref{Eq:first-order-taylor} is ``rapidly-regularizable''.
     \begin{subequations}
    \label{Eq:subeq-RapidR}
        \begin{align}
            \min_{{\tK}_*(\cdot), \tilde{M}(\cdot), \nu(\cdot), \beta(\cdot), \lambda(\cdot)}   \hspace{1.3cm} &\sigma_*(\cdot)   \tag{\theparentequation}\\
            \textbf{subject to }  \hspace{2cm} &\nonumber  \\
            & 0 < \beta,\; \;  0 < \nu,\; \;  0 < \lambda,\; \; \tilde{M} \in \mathcal{M}_{\phi, \Psi}, \\
            & \begin{bmatrix}
                \bar{Q}{A}^\intercal  + A \bar{Q} + T_2 &  \gamma E & \nu \tilde{\gamma} E  &[ \qquad T_3 &] \\
                 \gamma E^\intercal & -\beta I & 0 & \qquad 0 \quad 0 & \\
                 \nu \tilde{\gamma} E^\intercal & 0 & -\nu I  & \qquad 0 \quad 0 &\\
                * & * & *  &[ \qquad T_4 &]
            \end{bmatrix} \preceq 0  , \\
                & z(t,\theta) + \xi(t) \in \mathcal{U}_{\mathcal{F}} ,
        \end{align}
        \end{subequations}
        where $T_2 \coloneqq  \mathbf{Y}^\intercal B^\intercal + B \mathbf{Y} - \dot{\bar{Q}} + \lambda \bar{Q} + \beta \bar{Q}C_{cl}^\intercal C_{cl}\bar{Q}$, $\;\, T_3 \coloneqq \begin{bmatrix}
                 B {\tK}_* \theta + F \tilde{S} \theta & 0
            \end{bmatrix}$, and $\;\, T_4 \coloneqq \begin{bmatrix}
                -\lambda \sigma_* &\theta^\intercal {\tK}_*^\intercal G \\
                G {\tK}_* \theta & -\nu I
            \end{bmatrix}$.
\end{theorem}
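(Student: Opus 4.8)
\emph{Proof plan.} The plan is to use the minimizer $\tK_*(t)$ of \Cref{Eq:subeq-RapidR} as the correction law $z(t,\theta)=\tK_*(t)\theta(t)$ and to show that under this law the funnel ellipsoid $\mathcal{E}_{\bar Q(t)}$ stays invariant for the closed-loop error dynamics; since \Cref{Eq:subeq-RapidR} imposes $z(t,\theta)+\xi(t)\in\mathcal{U}_{\mathcal{F}}$ directly, this establishes both clauses of \Cref{def:rapidly-regularizable}. First I would substitute $\xi=\mathbf{K}\eta$, $z=\tK_*\theta$, the structured-nonlinearity description \Cref{Eq:structured-nonlinearlity-odd}, \Cref{Eq:structure}, and $\dot\theta=\tilde S\theta$ into \Cref{Eq:first-order-taylor}, obtaining
\[
\dot\eta = (A+B\mathbf{K})\eta + E\,\delta p + \big(B\tK_*+F\tilde S\big)\theta, \qquad \delta q = C_{cl}\eta + G\tK_*\theta,
\]
with $\delta p=\phi(t,q)-\phi(t,\bar q)$. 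Compared to \Cref{Prob:funnel_synthesis}, two features are new: a \emph{known} affine forcing $(B\tK_*+F\tilde S)\theta$ on $\dot\eta$, and an extra translate $G\tK_*\theta$ in the nonlinearity channel $\delta q$, so the multiplier for $\phi$ must now hold on an inflated set $\Psi$ containing $\{C_{cl}\eta+G\tK_*\theta:\eta\in\mathcal{E}_{\bar Q}\}$. The block $T_4$ and the scalar $\sigma_*$ quantify this inflation: restricting \Cref{Eq:subeq-RapidR} to its last $1{+}n_q$ coordinates and taking a Schur complement of the $-\nu I$ corner of $T_4$ yields $\lambda\nu\sigma_*\ge\|G\tK_*\theta\|^2$, so minimizing $\sigma_*$ keeps $\Psi$ as small as possible while $\tilde M\in\mathcal{M}_{\phi,\Psi}$ remains feasible.

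Next I would take the funnel Lyapunov function $V(\eta,t)=\eta^\top\bar Q(t)^{-1}\eta$, compute its derivative along the closed loop via $\tfrac{d}{dt}\bar Q^{-1}=-\bar Q^{-1}\dot{\bar Q}\bar Q^{-1}$, and test the block inequality \Cref{Eq:subeq-RapidR} against the vector whose first block is $\bar Q^{-1}\eta$, whose two middle blocks are suitably rescaled copies of the nonlinearity signal $\delta p$, whose fourth entry is the scalar $1$, and whose last block is $G\tK_*\theta$. In this pairing the $(1,1)$ block returns $\dot V+\lambda V$ (the $\lambda\bar Q$ term of $T_2$ playing the role of the $\alpha\mathbf{Q}$ term of \Cref{Prob:funnel_synthesis}) together with the penalty $\beta\|C_{cl}\eta\|^2$; the $\gamma E$ and $\nu\tilde\gamma E$ off-diagonal entries against the $-\beta I$ and $-\nu I$ diagonal blocks recover the cross term $2\eta^\top\bar Q^{-1}E\,\delta p$ together with the quadratic penalties on $\delta p$, which the multiplier property of $\tilde M\in\mathcal{M}_{\phi,\Psi}$ --- applicable because $\eta\in\mathcal{E}_{\bar Q}$ forces $\delta q\in\Psi$ --- bounds; $T_3$ contributes the known-forcing cross term $2\eta^\top\bar Q^{-1}(B\tK_*+F\tilde S)\theta$; and $T_4$ closes the S-procedure for the $\sigma_*/\Psi$ estimate. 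The resulting scalar inequality $\le 0$ collapses to $\dot V\le-\lambda V$ for every $\eta\in\mathcal{E}_{\bar Q(t)}$.

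Finally, $\dot V\le-\lambda V\le 0$ on $\mathcal{E}_{\bar Q(t)}$ with $\lambda>0$, together with $\eta(t_0)\in\mathcal{E}_{\bar Q(t_0)}$, gives by a standard invariance (Nagumo/comparison) argument for the time-varying sublevel set $\{\,\eta:\eta^\top\bar Q(t)^{-1}\eta\le 1\,\}$ that $\eta(t)\in\mathcal{E}_{\bar Q(t)}$ for all $t\in[t_0,t_f]$; combined with $\xi(t)+z(t,\theta)\in\mathcal{U}_{\mathcal{F}}$ this is exactly \Cref{def:rapidly-regularizable}, so the system is rapidly-regularizable. I expect the crux to be the middle step: pinning down the exact multiplying vector --- in particular the correct scalings $\gamma,\tilde\gamma,\nu$ of the two nonlinearity channels --- so that \Cref{Eq:subeq-RapidR} indeed reduces to $\dot V+\lambda V\le 0$, and, in tandem, choosing $\Psi$ so that $\tilde M\in\mathcal{M}_{\phi,\Psi}$ is simultaneously feasible and strong enough to dominate the $\theta$-induced translate $G\tK_*\theta$; once the vector is fixed, the remaining Schur-complement and bookkeeping with $\lambda,\nu,\beta$ is routine.
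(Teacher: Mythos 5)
Your overall strategy --- take $V(\eta,t)=\eta^{\intercal}\bar Q(t)^{-1}\eta$, test the block inequality against a stacked vector containing $\eta$, the nonlinearity signals and the constant $1$, close the implication with an S-procedure, and undo the congruence by $\mathrm{diag}\{\bar Q, I,\tilde\beta^{-1},I,I\}$ --- is the same as the paper's. But two steps in the middle do not survive contact with the actual LMI. First, the nonlinearity decomposition: you fold the $\theta$-induced translate into a single inflated channel $\delta q = C_{cl}\eta + G\tK_*\theta$ with one signal $\delta p$, and then describe the two middle blocks of the test vector as ``rescaled copies of $\delta p$.'' The paper instead splits $\delta g = \delta p + \delta\tilde p$ with $\delta p=\phi(t,q)-\phi(t,\bar q)$ (the original funnel channel, driven by $\eta$ through $C_{cl}\eta$ and certified by $M_\gamma$ with Lipschitz constant $\gamma$) and $\delta\tilde p=\phi(t,q+\tilde q)-\phi(t,q)$ with $\tilde q = G\tK_*\theta$ (the increment caused by the new control, certified by $\tilde M=\mathrm{diag}([-I,\tilde\gamma^2 I])$ over $\Psi$), cf.\ \Cref{Eq:updated-nonlinearities}. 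The two columns $\gamma E$ and $\nu\tilde\gamma E$ and the two diagonal blocks $-\beta I$, $-\nu I$ belong to these two \emph{distinct} signals; with a single channel your test vector cannot reproduce the LMI, and the off-diagonal entries $\theta^\intercal\tK_*^\intercal G$ in $T_4$ (which come from the multiplier inequality on $(\delta\tilde p,\tilde q)$, not from a bound on the ``inflation'' of $\Psi$) have nothing to pair against.

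Second, the role of $\sigma_*$. The $-\lambda\sigma_*$ entry of $T_4$ is the constant term of the S-procedure certificate for the side condition $\sigma_*(t)\le V_\theta(t)$ in \Cref{Eq:lyaponov-stability}; evaluating the quadratic form therefore yields $\dot V + \lambda\bigl(V-\sigma_*\bigr) + (\text{nonnegative multiplier terms})\le 0$, i.e.\ $\dot V<0$ only on $\{V\ge\sigma_*\}$. Your claimed conclusion $\dot V\le-\lambda V$ on all of $\mathcal{E}_{\bar Q}$ is too strong and cannot hold: the affine forcing $(B\tK_*+F\tilde S)\theta$ carried by $T_3$ is persistent, so $V$ cannot decay to zero --- it can only be driven into (and kept inside) the residual sublevel set $\{V\le\sigma_*\}$, which is exactly why the program minimizes $\sigma_*$ and why Proposition~\ref{prop:RRS-property} speaks of the invariant, attractive ellipsoid of radius governed by the forcing. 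Dropping the $\sigma_*$ term also hides the one hypothesis your final invariance argument actually needs, namely that the optimal $\sigma_*$ be at most the funnel level $1$ so that $\{V\le\sigma_*\}\subseteq\mathcal{E}_{\bar Q}$.
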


\begin{proof} 
    For the uncertain system \cref{Eq:first-order-taylor}, consider the Lyapunov function $V_{\theta}(t) = \eta(t, \theta)^\intercal \bar{Q}^{-1}(t) \eta(t, \theta)$, defined by the matrix-valued function of time $\bar{Q}(t)$. By the Lyapunov theory, if  $z(t,\theta) = {\tK}_*(t) \theta(t)$ satisfies conditions in \Cref{Eq:lyaponov-stability}, then the system states remain inside the funnel. 
    
    %
    \begin{equation}
    \label{Eq:lyaponov-stability}
        \textstyle \dot{V}_{\theta}(t)  < 0, \quad
        \text{for all }  \; \sigma_*(t) \leq V_{\theta}(t),\quad \begin{bmatrix}
            \delta\tilde{p}(t)\\
            \tilde{q}(t) 
        \end{bmatrix}^{\intercal} \tilde{M}(t)\begin{bmatrix}
            \delta\tilde{p}(t) \\
            \tilde{q}(t)
        \end{bmatrix} \geq 0
    \end{equation}
    We again used structured nonlinearities and defined the set of nonlinear functions $\tilde{\phi}_i: \mathbb{R}^{n_{\tilde{q}_i}} \rightarrow \mathbb{R}^{n_{\tilde{p}_i}} $ such that
    \begin{subequations}\label{Eq:updated-nonlinearities}
        \begin{align}
            \tilde{q}(t) &\coloneqq G_i \tilde{K}_*(t){\theta}(t) \\
            \delta\tilde{p}(t) &\coloneqq \tilde{\phi}(t,\tilde{q}(t)) = \phi(t,q(t) + \tilde{q}(t) )-\phi(t,q(t)) \label{Eq:tilde_p}
        \end{align}
    \end{subequations}
    where $q(t)$ and $\delta p(t)$ are defined in \Cref{Eq:structure}, and by the definition of $\delta\tilde{p}(t)$ we have 
    $$\delta g(\x(t),\u(t)) = \phi(t,q(t)+\tilde{q}(t)) - \phi(t,\bar{q}(t)) = \delta p(t) + \delta \tilde{p}(t) \, .$$
     Let $\chi = \left[ 
            \eta,\, p ,\, \tilde{p} ,\, \mathbf{1}
        \right]^\intercal$. Expanding \Cref{Eq:lyaponov-stability} in terms of the close loop system leads to

    \begin{equation}
        \label{Eq:Lyaponov-stability-proof}
            \chi^\intercal
            \begin{bmatrix}
                {A}^\intercal \bar{Q}^{-1} + \bar{Q}^{-1}  A + T_5 & \bar{Q}^{-1}E & \bar{Q}^{-1}E & \bar{Q}^{-1} (B {\tK}_* \theta + F\tilde{S} \theta )\\
                * & 0 & 0 & 0
            \end{bmatrix} \chi   \leq 0,
    \end{equation}
    for all:
    \begin{align*}
          \begin{bmatrix}
             \eta \\ \mathbf{1}
         \end{bmatrix}^\intercal \begin{bmatrix}
             \bar{Q}^{-1} & 0\\
             0& -\sigma_*
         \end{bmatrix} \begin{bmatrix}
             \eta \\ \mathbf{1}
         \end{bmatrix} > 0\;, \quad &
         \begin{bmatrix}
             \delta\tilde{p} \\ \tilde{q}
        \end{bmatrix}^\intercal  \tilde{M}  \begin{bmatrix}
             \delta\tilde{p} \\ \tilde{q}
        \end{bmatrix} \geq 0 \;, \quad  
        \begin{bmatrix}
             \delta q \\ \delta p
        \end{bmatrix}^\intercal  M_\gamma  \begin{bmatrix}
             \delta q \\ \delta p
        \end{bmatrix} \geq 0 \; ,
    \end{align*}
    where $T_5 \coloneqq \frac{d}{dt}\left(\bar{Q}^{-1}\right) + \mathbf{K}^\intercal B^\intercal \bar{Q}^{-1} + \bar{Q}^{-1} B \mathbf{K}$.
    We further consider $\tilde{M} = \text{diag}\left([-I, \; \tilde{\gamma}^{2}I]\right)$ as a valid local multiplier matrix, where $\tilde{\gamma}$ is a local Lipschitz constant for the nonlinear function $\tilde{\phi}$ over the set $\Psi$ .
    By using the S-procedure, the condition \Cref{Eq:Lyaponov-stability-proof} holds if and only if there exists scalars $\beta, \tilde{\beta}, \lambda \geq 0$ such that
    \begin{equation}
    \label{Eq:Lyapunov-stability-inv}
        \begin{bmatrix}
            {A}^\intercal \bar{Q}^{-1} + \bar{Q}^{-1}  A + T_5 & \gamma \bar{Q}^{-1} E & \tilde{\gamma}\bar{Q}^{-1} E & \bar{Q}^{-1} \left( B {\tK}_* \theta + F \theta\right) & 0 \\
             \gamma E^\intercal \bar{Q}^{-1} & -\beta I &  0 & 0 & 0\\ 
            * & *  &[& T_6 &]
        \end{bmatrix} \preceq 0 ,
    \end{equation}
    where we overuse the notation to define $T_5$, and define $T_6$ as
    \begin{align*}
        T_5 \coloneqq  \frac{d}{dt}\left(\bar{Q}^{-1}\right) + \mathbf{K}^\intercal B^\intercal \bar{Q}^{-1} + \bar{Q}^{-1} B \mathbf{K}+ \lambda \bar{Q}^{-1} + \beta C_{cl}^\intercal C_{cl}, \qquad T_6 \coloneqq \begin{bmatrix}
             -\lambda \sigma_* &\theta^\intercal {\tK}_*^\intercal G \\
            G {\tK}_* \theta & -\tilde{\beta}^{-1} I
        \end{bmatrix}.
    \end{align*}
    Pre- and post-multiplying \Cref{Eq:Lyapunov-stability-inv} by $\textbf{diag} \{\bar{Q}, I, \tilde{\beta}^{-1}, I , I \}$ and a change of variable $\nu = \tilde{\beta}^{-1}$ yields the final result.

\end{proof}

Assuming the disturbance dynamics is $\dot{\theta}(t) = \tilde{S}\theta(t)$ with the initial condition $\theta_0$, the system dynamic in \Cref{Eq:first-order-taylor} with control input $u(t) = \bar{u}(t)+\xi(t)+z(t, \theta)$, where $\xi(t) = \mathbf{K}(t) \eta(t)$, can be represented as,

\begin{equation}
\label{Eq:projection} 
    \dot{\eta}(t) = (A(t) + B(t) \mathbf{K}(t) )\eta(t) + \delta g(\x(t), \u(t)) + B(t)\left( B(t)^\dagger F\tilde{S}\theta(t) + z(t, \theta) \right) + \Pi_{\mathcal{R}(B(t))^\perp} F\tilde{S}\theta(t) .
\end{equation}
Setting $z(t, \theta) = -B^\dagger F \tilde{S} \theta(t) + \mathbf{\hat{K}}(t) \theta(t) = \tK_*(t) \theta(t)$, \Cref{Eq:projection} can be rewritten as 
\begin{equation}
\label{Eq:proj-dynamic}
    \dot{\eta}(t) = A_{cl}(t) \eta(t) + E \delta p(t) + E \delta\tilde{p}(t) + B(t)\mathbf{\hat{K}}(t) \theta(t) + \Pi_{\mathcal{R}(B(t))^\perp} F\tilde{S}\theta(t) .
\end{equation}
where $A_{cl}(t) = A(t) + B(t) \mathbf{K}(t)$, and $\delta p(t)$ and $\delta \tilde p(t)$ are defined in \Cref{Eq:structure} and \Cref{Eq:updated-nonlinearities}, respectively.  
Note that the signals $\Pi_{\mathcal{R}(B(t))^\perp} F\tilde{S}\theta(t)$ and $B(t)\mathbf{\hat{K}}(t) \theta(t)$ are now orthogonal. 
This implies that the control signal $z(t,\theta)$ would not directly affect the part of dynamics that is generated by $~\Pi_{\mathcal{R}(B(t))^\perp} F\tilde{S}\theta(t)$. Moreover, $z(t,\theta)$ causes additional disturbance in terms of $\delta g(\x(t), \u(t))$ captured by $\delta \tilde{p}(t)$.
The local Lipschitz bounds for $\delta g(\x(t), \u(t))$ drove from \Cref{Eq:structure} and \Cref{Eq:updated-nonlinearities} are only accurate when $\eta(t) \in \mathcal{E}_{\bar{Q}(t)}$. As such, in order to have even the possibility of achieving online performance for keeping the system states in the feasible region, we require that the part of the system described by $E \delta\tilde{p}(t) + B(t)\mathbf{\hat{K}}(t) \theta(t) + \Pi_{\mathcal{R}(B(t))^\perp} F\tilde{S}\theta(t)$ be bounded for all $t\in[t_0,t_f]$. 
This observation thereby motivates 
Proposition \ref{prop:RRS-property}. 

    

\begin{proposition}
    \label{prop:RRS-property}    
    The nonlinear system of the form \Cref{Eq:proj-dynamic} is rapidly regularizable in terms of \Cref{Thm:RRS}, if and only if 
    the part of the system described by $\vartheta(t)$ is bounded such that $ 
        \|\vartheta(t)\|^2 < \sigma^*(t)$, 
    where
    \begin{subequations}
        \begin{equation}
              \quad \vartheta(t) = E \delta\tilde{p}(t) + B(t)\mathbf{\hat{K}}(t) \theta(t) + \Pi_{\mathcal{R}(B(t))^\perp} F\tilde{S}\theta(t)  \, , \quad \text{and} \quad \mathbf{\hat{K}}(t) = \tK_*(t) + B^\dagger F \tilde{S}
        \end{equation}
        Furthermore, the ellipsoid
        \begin{equation}
        \label{Eq:invariant-ellips}
            \mathcal{E}:=\left\{\eta: \eta^{\mathrm{T}} \bar{Q}^{-1} \eta \leq\|\vartheta(\cdot)\|_{\Delta t}^2\right\} \quad \text { with } \quad\|\vartheta(\cdot)\|_{\Delta t}=\sup_{t_f \geq t \geq t_0}\|\vartheta(t)\|
        \end{equation}
        is invariant and attractive.
    \end{subequations}
\end{proposition}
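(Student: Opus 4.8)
The plan is to read the closed loop \Cref{Eq:proj-dynamic} as the nominal funnel error dynamics \Cref{Eq:structured-nonlinearlity-odd} (with the contracted system matrix $A_{cl}$ and the nonlinearity channel $(\delta q,\delta p)$) driven by the \emph{single} additive signal $\vartheta(t)$, and then to run an input-to-state-stability (ISS)-type Lyapunov argument with the certificate $\bar Q(\cdot)$ supplied by \Cref{Prob:funnel_synthesis}. Two preliminary reductions set this up. First, using $\mathbf{\hat{K}} = \tK_* + B^\dagger F\tilde S$ one checks the identity $B\mathbf{\hat{K}}\theta + \Pi_{\mathcal{R}(B)^\perp}F\tilde S\theta = B\tK_*\theta + F\tilde S\theta$, so that $\vartheta = E\,\delta\tilde p + B\tK_*\theta + F\tilde S\theta$, and the Lipschitz estimate behind \Cref{Eq:updated-nonlinearities} gives $\|\delta\tilde p(t)\|\le\tilde\gamma\,\|G\tK_*(t)\theta(t)\|$ as long as the arguments of $\tilde\phi$ remain in $\Psi$; hence $\|\vartheta(t)\|$ admits an over-bound depending only on $\theta(t)$ and the designed gains. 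Second, applying the congruence $\mathrm{diag}(\bar Q^{-1},I)$ to the $T_1$ block of \Cref{Prob:funnel_synthesis} and discarding the nonnegative $M_\gamma$-multiplier term shows that $V_\theta(\eta)=\eta^\intercal\bar Q^{-1}\eta$ satisfies $\frac{d}{dt}(\eta^\intercal\bar Q^{-1}\eta)+2\eta^\intercal\bar Q^{-1}E\,\delta p \le -\alpha\,\eta^\intercal\bar Q^{-1}\eta$ on $\mathcal E_{\bar Q}$ along the \emph{unperturbed} closed loop, where $\alpha$ is the contraction rate of \Cref{Prob:funnel_synthesis}.

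Differentiating $V_\theta$ along \Cref{Eq:proj-dynamic} and inserting the funnel estimate leaves only the cross term with $\vartheta$, yielding the ISS inequality
\[ \dot V_\theta \;\le\; -\alpha V_\theta + 2\,\eta^\intercal\bar Q^{-1}\vartheta \;\le\; -\sqrt{V_\theta}\,\bigl(\alpha\sqrt{V_\theta}-2\|\vartheta\|_{\bar Q^{-1}}\bigr), \]
valid while $\eta(t)\in\mathcal E_{\bar Q(t)}$, the last step being Cauchy--Schwarz in the $\bar Q^{-1}$-inner product. This inequality is precisely what the block LMI of \Cref{Thm:RRS} encodes: running the S-procedure with the single Lipschitz multiplier $\tilde M=\mathrm{diag}(-I,\tilde\gamma^2 I)$ and the congruence used in the proof of \Cref{Thm:RRS} (read in reverse), feasibility of \Cref{Eq:subeq-RapidR} with a given budget $\sigma^*(\cdot)$ is equivalent to the sublevel set $\{V_\theta\ge\sigma^*(t)\}$ being repelling along \Cref{Eq:proj-dynamic} together with $z+\xi\in\mathcal U_{\mathcal F}$; and, up to the normalization that the slacks $\lambda,\nu,\beta$ (and $\bar Q\preceq Q_{\max}$) build into the LMI, that repelling property holds if and only if $\|\vartheta(t)\|^2<\sigma^*(t)$ for all $t$. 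The $(\Leftarrow)$ half of this last equivalence is exactly the displayed Lyapunov estimate, which gives $\dot V_\theta<0$ whenever $V_\theta>\|\vartheta(t)\|^2$.

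For the invariance/attractivity claim I would argue directly from the same estimate, now with $\sigma^*$ specialized to the uniform bound $\|\vartheta(\cdot)\|_{\Delta t}^2$ (which, being $<1$ under rapid-regularizability, makes $\mathcal E\subseteq\mathcal E_{\bar Q}$): $V_\theta$ strictly decreases, at a rate bounded away from zero, on $\mathcal E_{\bar Q}\setminus\mathcal E$ and is non-increasing on $\partial\mathcal E$, so $\mathcal E$ is forward invariant, $\mathcal E_{\bar Q}$ is forward invariant, and every trajectory from $\mathcal E_{\bar Q}$ enters $\mathcal E$ in finite time --- i.e.\ $\mathcal E$ is invariant and attractive, and in particular $\eta(t)\in\mathcal E_{\bar Q(t)}$ for all $t$, which together with the control constraint is exactly \Cref{def:rapidly-regularizable}. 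For the $(\Rightarrow)$ half, if the system is rapidly-regularizable then $\eta$ stays in $\mathcal E_{\bar Q}$, so the set of budgets $\sigma^*$ for which $\{V_\theta\ge\sigma^*\}$ is repelling is nonempty; evaluating the decrease condition on $\partial\{V_\theta=\sigma^*\}$ with $\vartheta$ aligned along $\bar Q^{-1}\eta$ (the worst case) forces $\alpha\sigma^*>2\sqrt{\sigma^*}\,\|\vartheta\|_{\bar Q^{-1}}$, i.e.\ $\sigma^*>\|\vartheta\|^2$ in the same normalization.

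The hard part will be making this identification genuinely two-sided and rigorous, and it has two sources of friction. First, $\delta\tilde p$ inside $\vartheta$ depends on $\eta$ through $q=\bar q+C_{cl}\eta$, so $\|\vartheta(t)\|$ is not literally an exogenous signal; the remedy is the Lipschitz surrogate $\|\delta\tilde p\|\le\tilde\gamma\|G\tK_*\theta\|$, but this over-bound is licensed only while $\eta\in\mathcal E_{\bar Q}$ and $q\in\Psi$, so the whole argument must be phrased as a maximal-interval / continuity bootstrap --- the bound is used to prove invariance of $\mathcal E_{\bar Q}$, and invariance in turn validates the bound --- and one must check it does not break at $t_0$. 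Second, the worst-case alignment of $\vartheta$ used in the necessity step must respect the structure $\vartheta=E\,\delta\tilde p+B\tK_*\theta+F\tilde S\theta$: only $B\tK_*\theta+F\tilde S\theta$ is freely steerable, whereas $E\,\delta\tilde p$ is Lipschitz-constrained, so the $(\Rightarrow)$ claim is really about the surrogate norm of $\vartheta$ that appears in the $\sigma^*$-budget of \Cref{Eq:subeq-RapidR}, and the equivalence should be stated with that understanding.
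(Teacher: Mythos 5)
You should first be aware that the paper itself leaves the proof environment for \Cref{prop:RRS-property} empty --- there is no argument in the source to compare yours against. Judged on its own, your sufficiency-plus-invariance half is the right kind of argument (the classical invariant-ellipsoid / ISS estimate with the funnel certificate $\bar Q$), and your preliminary identity $B\mathbf{\hat K}\theta+\Pi_{\mathcal R(B)^\perp}F\tilde S\theta=B\tK_*\theta+F\tilde S\theta$ is correct. But there is a concrete quantitative gap in the step you display: from $\dot V_\theta\le-\alpha V_\theta+2\sqrt{V_\theta}\,\|\vartheta\|_{\bar Q^{-1}}$ you obtain forward invariance of the sublevel set $\{V_\theta\le (2/\alpha)^2\|\vartheta\|_{\bar Q^{-1}}^2\}$, which is \emph{not} the set in \Cref{Eq:invariant-ellips}: the level differs by the factor $4/\alpha^2$ and, more importantly, by the change of norm from $\|\vartheta\|_{\bar Q^{-1}}$ to the Euclidean $\|\vartheta\|$ appearing in the proposition. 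These do not coincide in general, so the displayed Cauchy--Schwarz estimate proves invariance of a different ellipsoid than the one claimed. The remark that ``this inequality is precisely what the block LMI of \Cref{Thm:RRS} encodes'' is not accurate as written: the LMI in \Cref{Eq:subeq-RapidR} handles the disturbance by an S-procedure with full quadratic completion (the $-\beta I$, $-\nu I$, $-\lambda\sigma_*$ blocks), which is exactly what delivers the $\sigma_*$-sublevel set directly and without the $4/\alpha^2$ degradation; if you want the stated level $\|\vartheta(\cdot)\|_{\Delta t}^2$, you must run the argument through that completed-square form rather than through Cauchy--Schwarz, and track which norm of $\vartheta$ the multipliers actually produce. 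You should also state explicitly the containment hypothesis $\|\vartheta(\cdot)\|_{\Delta t}^2\le 1$ needed for $\mathcal E\subseteq\mathcal E_{\bar Q}$, which the proposition omits and you only mention parenthetically.

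The necessity direction is the larger gap, and your own caveats essentially concede it. The ``worst-case alignment'' of $\vartheta$ along $\bar Q^{-1}\eta$ is not available: $\theta(t)$ evolves autonomously under $\dot\theta=\tilde S\theta$ from a fixed $\theta_0$, so neither $B\tK_*\theta+F\tilde S\theta$ nor the Lipschitz-constrained $E\,\delta\tilde p$ can be steered to realize the worst case at each $t$, and feasibility of \Cref{Eq:subeq-RapidR} is a sufficient condition in \Cref{Thm:RRS}, not an equivalence you can read in reverse. Consequently what you have is: $\|\vartheta(t)\|^2<\sigma^*(t)$ (in the appropriate weighted norm) implies rapid regularizability, plus invariance and attractivity of a suitably normalized $\mathcal E$; the ``only if'' would require either a converse-Lyapunov argument tailored to the quadratic certificate class fixed by \Cref{Prob:funnel_synthesis}, or a reinterpretation of $\sigma^*$ as the \emph{optimal value} of \Cref{Eq:subeq-RapidR} (in which case the inequality $\|\vartheta\|^2<\sigma^*$ becomes closer to a definition than a theorem). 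Finally, the bootstrap you describe for $\delta\tilde p$ --- using invariance of $\mathcal E_{\bar Q}$ to license the Lipschitz surrogate that proves invariance --- is the right fix but must be written as a genuine maximal-interval continuity argument with the initial condition $\eta(t_0)\in\mathcal E_{\bar Q(t_0)}$ checked; as sketched it is still circular.
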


\begin{proof}
    
\end{proof}

\section{Data-Guided Regulation for Adaptive Nonlinear Control}
\label{sec:DG-RAN}

Suppose the dynamics of the disturbance $\theta(t)$ satisfies 
\begin{subequations}
\label{Eq:dynamic-theta}
    \begin{equation}
    \label{subEq:thet_contin} \tag{\theparentequation}
        \dot{\theta}(t) = \tilde{S}\theta(t), \quad \theta_0 \in \mathbb{R}^{n_\theta}
    \end{equation} 
    \begin{equation}
    \label{subEq:thet_disc} 
        \text{where } \quad \theta_{n_t} = \theta(n_t \delta t) = S^{n_t} \theta_0 \, , \quad \text{and} \quad S = \exp(\delta t \tilde{S})\, , \quad n_t\in\{0,1,2,\cdots\}
    \end{equation}
\end{subequations}
The primary focus of this section is devising an online, data-driven feedback controller to further assist the offline-computed robust controller for a rapidly-regularizable system with unknown disturbance dynamics \Cref{Eq:dynamic-theta}, i.e., $\tilde{S}$ and $\theta_0$ are unknown. To this end, we propose an iterative procedure for updating the feedback gain $\tK(t)$ at each time $t=n_t \delta t$. For a rapidly-regularizable system, We provide an optimization problem for obtaining the corresponding sub-optimal data-driven controller and conclude the section with the main theorem.

 Given a data set ${\cal D}_t = \{x_i, u_i\}_{i=0}^{n_t}$ available at time $t$, we are interested in characterizing the algorithmic map ${\cal D}_t \rightarrow \tK(t),$ where $\tK(t)$ is the data-driven regulating feedback gains for the uncertain dynamical system \Cref{Eq:first-order-taylor}. 
 In fact, we propose an iterative algorithm for updating the feedback gain (policy) based on the available (closed loop) data. The corresponding synthesis procedure is detailed in \Cref{alg:Controller}.

The goal is to use the measured data ${\cal D}_t = \{x_i, u_i\}_{i=0}^{n_t}$ to update estimates for $\theta_0$ and $S = \exp{(\tilde{S} \delta t)}$. Specifically, at time $``t_0"$, the algorithm measures $x_0$.
%
%
The ${n_t}^{th}$ measurement, $x_{n_t}$, is obtained at $t=n_t\delta t$, where $n_t \in \{1,2,\cdots\}$. We define
\begin{equation}
\label{Eq:observation}
    y_{n_t} \coloneqq F^\dagger \left(\Delta x_{n_t} - \Delta f_{n_t} \right), \; 
\end{equation}
where $\; \Delta x_{n_t} \coloneqq x_{n_t} - x_{n_t-1}\;$ and $\; \Delta f_{n_t} \coloneqq \frac{\delta t}{2}\left(f(x_{n_t}, u_{n_t}) + f(x_{n_t-1}, u_{n_t-1})\right) .$
When $n_t > 1$, the algorithm updates the estimates for $\theta_0$ and $S = \exp{(\tilde{S} \delta t)}$ as follows:
\begin{subequations}
    \label{Eq:estimate-theta-S}
    \begin{equation}
    \textstyle
        \hat{S}_{n_t} = \mathcal{Y}_{n_t}\mathcal{Z}_{n_t-2}^\dagger   , \; \qquad
        \hat{\theta}_{n_t}^{\circ} =  \left(\hat{S}_{n_t} - I\right)^\dagger y_1, \tag{\theparentequation}
    \end{equation}
    where 
    \begin{align}
    \textstyle 
        \mathcal{Y}_{n_t} &= \begin{bmatrix}
            y_2 & y_3 - \mathcal{Y}_{2} \mathcal{Z}_0^\dagger y_2 & \cdots &  y_{n_t} - \mathcal{Y}_{n_t-1} \mathcal{Z}_{n_t-3}^\dagger y_{n_t-1}
        \end{bmatrix} \;  , \\
        \mathcal{Z}_{\ell} &= \begin{bmatrix}
            \z_0 & \z_1 & \cdots &  \z_{\ell}
        \end{bmatrix} \; , \quad
          \z_\ell = \Pi_{\mathcal{R}\left(\mathcal{Z}_{\ell-1}\right)^\perp} y_{\ell+1} \; , \quad \text{for } \ell\geq 1 \;, \quad \z_0 = y_1 \; , \label{subEq:zk-wk}
    \end{align}
\end{subequations}
and the columns of $V_r$ are the right singular
vectors of the ``thin" SVD of $F = U_{r} \Sigma_{r} V_{r}^\intercal$, where $r$ denotes the dimension of the range of matrix $F$. 

\begin{algorithm}[tp]
		\caption{\acf{DG-RAN}}
		\begin{algorithmic}[1]
			\State \textbf{Initialization} \hspace{1mm} (at $t=0$)
			\State \hspace{2mm} Fix $\delta t$; Measure $\x_0$; set $\Tilde{K} = 0$ and $n_t=0$
			\State \hspace{2mm} Set \hspace{1mm}$\mathcal{Y}_{n_t} = \left(\begin{array}{c} \ \end{array}\right)$ ,  and  \; $\mathcal{Z}_{n_t} = \left(\begin{array}{c} \ \end{array}\right)$ 
			\State \textbf{while stopping criterion not met}\footnotemark
			\State \hspace{2mm} \textbf{If } $n_t < \lfloor t / \delta t \rfloor$
                \State \hspace{5mm} Set \hspace{1mm} $n_t = \lfloor t / \delta t \rfloor$ 
			\State \hspace{5mm} Measure \hspace{1mm} $\x_{n_t}$
                \State \hspace{5mm} Set \hspace{1mm} $y_{n_t} = F^\dagger \left(\Delta x_{n_t} - \Delta f_{n_t} \right)$  
                \State \hspace{5mm} \textbf{If } $n_t == 1$
                \State \hspace{7mm} Set \hspace{1mm} $\mathcal{Z}_{0} =
	    \left(\begin{array}{c}
		y_1 \end{array}\right)$
                \State \hspace{5mm} \textbf{else: } 
                \State \hspace{7mm} Update \hspace{1mm} $\mathcal{Y}_{n_t} =
	    \left(\begin{array}{cc}
		\mathcal{Y}_{n_t-1} & y_{n_t} - \mathcal{Y}_{n_t-1} \mathcal{Z}_{n_t-3}^\dagger y_{n_t-1}
		\end{array}\right)$                
		\State \hspace{19.5mm} $\mathcal{Z}_{n_t-1} =
	    \left(\begin{array}{cc}
		\mathcal{Z}_{n_t-2} & \Pi_{\mathcal{R}\left(\mathcal{Z}_{n_t-1}\right)^\perp} y_{n_t}
		\end{array}\right)$
    	\State \hspace{20mm} $\hat{S}_{n_t} = \mathcal{Y}_{n_t}\mathcal{Z}_{n_t-2}^\dagger , \quad \hat{\theta}_{n_t}^{\circ} =  \left(\hat{S}_{n_t} - I\right)^\dagger y_1$
			\State \hspace{7mm} Update \hspace{1mm} $\tK(t)$ by solving \Cref{Eq:subeq-regularizability}
		\end{algorithmic}
		\label{alg:Controller}
	\end{algorithm}

\footnotetext{The stopping criterion can be application specific.
For instance, for identifying modes of the disturbance system affecting the dynamics, generating $k_1$ linearly independent data is sufficient.}

In what follows, we provide the performance analysis for \ac{DG-RAN} in the general setting. 
\ac{DG-RAN} is particularly relevant when $n_t\leq r$, where $r$ denotes the dimension of the underlying disturbance system $\dot{\theta}(t)$ affecting system states $x(t)$. 
We examine the effects of \ac{DG-RAN} in terms of the informativity of generated data in \Cref{subsec:informativity}, and provide convergence analysis of the algorithm in \Cref{subsec:convergence}.

\subsection{Informativity of the DG-RAN Generated Data }
\label{subsec:informativity}

In the sequel, we show that DG-RAN generates linearly independent data, describing the observable modes of the disturbance dynamics $\left(\Pi_{\mathcal{R}(F^\intercal)} S \right)$ from the state-trajectory measurements ${\cal D}_t = \{x_i, u_i\}_{i=0}^{n_t}$. We refer to this as the informativity of data, and
then proceed to make a connection between this independence
structure and the number of excited modes of the disturbance dynamics $\left(\Pi_{\mathcal{R}(F^\intercal)} S \right)$ affecting the system states

\begin{lemma}
    \label{lem:algo-dyn}
    Let $\theta_0$ excites $k_1+k_2$ modes of $S$ such that $k_1$ modes are in $\mathcal{R}(F^\intercal)$ and $k_2$ modes are in $\mathcal{R}(F^\intercal)^\perp$. For all $n_t > 2$,  $\{\z_0, \z_1, \cdots,\z_{n_t-3}\}$ is a set of ``orthogonal'' vectors {(possibly including the zero vector)}.
    Furthermore, If the excited modes correspond to distinct eigenvalues of $S$ where $\lambda_i\not=1$ for $\v_i\in\mathcal{R}(F^\intercal)$, then  $\left[ y_1, \; \mathcal{Y}_{n_t}\right]$, generated by \Cref{alg:Controller}, is a set of linearly independent vectors for any $0 < m$ such that $ m \leq k_1 \leq r$.
\end{lemma}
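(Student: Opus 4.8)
The plan is to reduce both assertions to an explicit closed form for the measurements $y_{n_t}$ and then invoke only elementary linear algebra --- unnormalized Gram--Schmidt for the orthogonality part, and a Vandermonde argument for the independence part. First I would integrate \Cref{Eq:nonlinear_odd} over $[(n_t-1)\delta t,\, n_t \delta t]$: since $\int F\dot\theta = F(\theta_{n_t}-\theta_{n_t-1})$ is exact while $\Delta f_{n_t}$ is the trapezoidal quadrature of $\int f$, we get $\Delta x_{n_t}-\Delta f_{n_t} = F(\theta_{n_t}-\theta_{n_t-1})$ (consistent with the discrete-time model in \Cref{Eq:dynamic-theta}; otherwise one carries a quadrature residual of order $\delta t^3$ that does not change the algebraic structure), so by \Cref{Eq:observation} and $F^\dagger F = \Pi_{\mathcal R(F^\intercal)}$,
\[
    y_{n_t} \;=\; \Pi_{\mathcal R(F^\intercal)}\bigl(\theta_{n_t}-\theta_{n_t-1}\bigr) \;=\; \Pi_{\mathcal R(F^\intercal)}\,S^{\,n_t-1}(S-I)\,\theta_0 .
\]
Expanding $\theta_0=\sum_{i=1}^{k_1+k_2}c_i\v_i$ over the excited eigenvectors ($S\v_i=\lambda_i\v_i$, $c_i\neq0$) and using $\Pi_{\mathcal R(F^\intercal)}\v_i=\v_i$ for $i\le k_1$ and $\Pi_{\mathcal R(F^\intercal)}\v_i=0$ for $i>k_1$, this collapses to $y_{n_t}=\sum_{i=1}^{k_1} d_i\,\lambda_i^{\,n_t-1}\v_i$ with $d_i:=c_i(\lambda_i-1)$, where every $d_i\neq0$ under the stated hypotheses ($c_i\neq0$ and $\lambda_i\neq1$ for $\v_i\in\mathcal R(F^\intercal)$); note $k_1\le r=\dim\mathcal R(F^\intercal)$ is then automatic.

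For the orthogonality claim I would observe that the recursion $\z_0=y_1$, $\z_\ell=\Pi_{\mathcal R(\mathcal Z_{\ell-1})^\perp}y_{\ell+1}$ in \Cref{subEq:zk-wk} is exactly (unnormalized) Gram--Schmidt applied to $y_1,y_2,\dots$: by construction $\z_\ell\perp\mathcal R(\mathcal Z_{\ell-1})=\mathrm{span}\{\z_0,\dots,\z_{\ell-1}\}$, so a one-line induction on $\ell$ over the range $\ell=0,\dots,n_t-3$ (which needs $n_t>2$) shows $\{\z_0,\dots,\z_{n_t-3}\}$ is pairwise orthogonal, with the understanding that $\z_\ell=0$ precisely when $y_{\ell+1}\in\mathrm{span}\{y_1,\dots,y_\ell\}$ --- this accounts for the parenthetical ``possibly including the zero vector.''

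For the independence claim I would proceed in two moves. (i) By induction on the column index, the $i$-th column of $\mathcal Y_{n_t}$ equals $y_{i+1}$ plus a linear combination of $y_2,\dots,y_i$: the correction $\mathcal Y_i\,\mathcal Z_{i-2}^\dagger y_i$ is a linear combination of the columns of $\mathcal Y_i$, which by the inductive hypothesis lie in $\mathrm{span}\{y_2,\dots,y_i\}$, and only this span containment is needed --- the pseudoinverse plays no role. Hence $[\,y_1\ \ \mathcal Y_{n_t}\,]$ equals $[\,y_1\ y_2\ \cdots\ y_{n_t}\,]$ times a unit upper-triangular (hence invertible) matrix, so the two matrices have the same rank, and the same reduction restricted to a leading block shows any $m$ leading columns of $[\,y_1\ \ \mathcal Y_{n_t}\,]$ are independent iff $y_1,\dots,y_m$ are. (ii) By Step 1, $[\,y_1\ \cdots\ y_m\,]=V_{k_1}\,D\,W_m$ with $V_{k_1}=[\v_1\ \cdots\ \v_{k_1}]$ of full column rank $k_1$ (distinct eigenvalues give independent eigenvectors), $D=\mathrm{diag}(d_1,\dots,d_{k_1})$ invertible, and $W_m\in\mathbb R^{k_1\times m}$ given by $(W_m)_{ij}=\lambda_i^{\,j-1}$, whose submatrix on rows and columns $1,\dots,m$ is a Vandermonde matrix with distinct nodes $\lambda_1,\dots,\lambda_m$, hence invertible when $m\le k_1$; thus $\mathrm{rank}\,W_m=m$ and $\mathrm{rank}[\,y_1\ \cdots\ y_m\,]=m$. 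Combining (i) and (ii) under $m\le k_1\le r$ (reading $m$ as the number of collected columns, $m=n_t$) delivers the claim.

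The routine parts are Steps 1 and 2; the delicate point is move (i) of Step 3 --- verifying that, although $\mathcal Y_{n_t}$ is assembled with the pseudoinverses $\mathcal Z_{i-2}^\dagger$, each newly appended column perturbs $y_{i+1}$ only by a vector in $\mathrm{span}\{y_2,\dots,y_i\}$, so that column rank is preserved \emph{exactly} --- together with the care needed for the degenerate case in which some $\z_\ell$ vanishes, which by Step 1 cannot happen while $n_t\le k_1$ (there $y_1,\dots,y_{n_t}$ are already independent), i.e.\ inside the regime of the claim.
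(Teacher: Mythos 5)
Your proof is correct, and its skeleton matches the paper's: both derive the closed form $y_{n_t}=\sum_{i=1}^{k_1}\rho_i\lambda_i^{\,n_t-1}(\lambda_i-1)\v_i$ by integrating the dynamics and using $F^\dagger F=\Pi_{\mathcal R(F^\intercal)}$, both dispose of the orthogonality claim by recognizing \Cref{subEq:zk-wk} as unnormalized Gram--Schmidt, and both finish with a Vandermonde rank argument on the distinct eigenvalues. Where you genuinely diverge is the middle reduction. The paper proves by induction the structural identity $\w_{n_t}=S\z_{n_t-2}$, explicitly tracks the resulting polynomial coefficients $\xi^\ell_j$ in a strictly upper-triangular matrix $\Xi$, and concludes from $L_{k_1}^m(S)(I-\Xi)\ogamma=0$ that $\ogamma=0$ using invertibility of $I-\Xi$. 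You instead show only that each column of $\mathcal Y_{n_t}$ lies in $y_{i+1}+\mathrm{span}\{y_2,\dots,y_i\}$ (noting that the pseudoinverse contributes nothing beyond a span containment), so that $\left[\,y_1\ \ \mathcal Y_{n_t}\,\right]$ is $\left[\,y_1\ \cdots\ y_{n_t}\,\right]$ times a unit upper-triangular matrix; this is the same fact the paper encodes in $I-\Xi$, but your bookkeeping is cleaner, avoids the auxiliary identity $\w_j=S\z_{j-2}$ entirely, and makes transparent why rank is preserved exactly. The paper's version buys the explicit relation between the $\w$'s and the Gram--Schmidt residuals, which it reuses in the proof of \Cref{thm:tilde_K} to express $\hat S_{n_t}$; your version is the shorter path if the lemma is the only goal. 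Your closing remark that no $\z_\ell$ can vanish while $n_t\le k_1$ is a point the paper leaves implicit, and is a welcome addition.
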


\begin{proof}
    The definition of $\z_\ell$ in \Cref{subEq:zk-wk} implies that $\z_\ell \perp \mathcal{R}(\mathcal{Z}_{\ell-1})$ for all $\ell \geq 1$, and $\z_i \in \mathcal{R}(\mathcal{Z}_{\ell-1})$ for all $i=1,\ldots,\ell-1$ and all $\ell \geq 1$.
    Thus, the first claim follows since $\{\z_0,\z_1,\cdots,\z_\ell\}$ consists of orthogonal vectors. For the second claim, without loss of generality, let $\lambda_1,\dots, \lambda_{k_1}$ be the eigenvalues of $S$ corresponding to the excited modes $\v_1, \dots, \v_{k_1} \in \mathcal{R}(F^\intercal)$, and similarly $\lambda_{k_1+1},\dots, \lambda_{k_1+k_2}$ be corresponding to $\v_{k_1+1}, \dots, \v_{k_1+k_2} \in \mathcal{R}(F^\intercal)^\perp$.  For $n_t\geq 1$ the trajectory generated by \Cref{alg:Controller} satisfies
    \begin{equation}
        \label{Eq:traj-alg}
        \textstyle 
        \Delta x_{n_t} = \int_{(n_t-1)\delta t}^{n_t \delta t} f(\x(\tau), \u(\tau)) d\tau + F S^{n_t-1}\left(S-I\right) \theta_0 .
    \end{equation}
    By estimating the integral in \Cref{Eq:traj-alg}, we have
    \begin{align*}
    \textstyle 
        \left(F^\dagger F\right) (S)^{n_t-1}\left(S-I\right) \theta_0 = F^\dagger  \left(\Delta x_{n_t} - \Delta f_{n_t}  \right)+ \epsilon_{n_t} = y_{n_t}+ \epsilon_{n_t} ,
    \end{align*}
    where $\epsilon_{n_t} = F^\dagger \left(\int_{(k-1)\delta t}^{k \delta t} f(\x(\tau), \u(\tau)) d\tau - \Delta f_k\right)$ is assumed to be negligible\footnote{ A full discussion on the approximation error $\epsilon_{n_t}$ will be provided in the final manuscript.}. Note that $F^\dagger F = V_{r} V_{r}^\intercal = \Pi_{\mathcal{R}(V_r)}$, and  $\theta_0 = \sum\limits_{i=1}^{k_1+k_2} \rho_i \v_i$, where $\rho_i$ are some nonzero real coefficients.  Therefore, for the $n_t^{th}$ observation, we have
    \begin{align*}
        y_{n_t}  &=  \Pi_{\mathcal{R}\left(V_r\right)} (S)^{n_t-1}\left(S-I\right) \theta_0  = \Pi_{\mathcal{R}\left(V_r\right)} \left( \sum\limits_{i=1}^{k_1} \rho_i (\lambda_i)^{n_t-1}(\lambda_i-1) \v_i + \sum\limits_{i=k_1+1}^{k_2} \rho_i (\lambda_i)^{n_t-1}(\lambda_i-1) \v_i  \right) \; .
    \end{align*}
    Since $\mathcal{R}(V_r) = \mathcal{R}(F^\intercal)$ and $\v_{k_1+1}, \dots, \v_{k_1+k_2} \in \mathcal{R}(F^\intercal)^\perp$, we have
    \begin{align}
    \label{Eq: y_nt}
    \textstyle 
       y_{n_t} = \Pi_{\mathcal{R}\left(V_r\right)} (S)^{n_t-1}\left(S-I\right) \theta_0 =  \sum\limits_{i=1}^{k_1} \rho_i (\lambda_i)^{n_t-1}(\lambda_i-1) \v_i
    \end{align}

     Let $\w_1\coloneqq y_1 $ and $\w_2 \coloneqq y_2$, and $\w_{i}$ be the $i^{th}$ column of $\mathcal{Y}_{n_t}$. 
    We claim that for $n_t \geq 3$ there exist scalar coefficients $ \xi^{n_t}_{1}, \cdots, \xi^{n_t}_{n_t-1} \in \mathbb{R}$ such that, 
    \begin{equation}
    \label{Eq:w_nt-proof}
        \textstyle 
         \w_{n_t} \coloneqq y_{n_t} - \mathcal{Y}_{n_t-1} \mathcal{Z}_{n_t-3}^\dagger y_{n_t-1} = S \z_{n_t-2} =  \sum\limits_{i=1}^{k_1} \rho_i \left[ \left(\lambda_i\right)^{n_t-1} - \sum\limits_{j=1}^{n_t-2} \xi^{n_t-1}_j \left(\lambda_i\right)^{j}   \right](\lambda_i-1) \v_i,
    \end{equation}
    The proof of the last claim is by induction.
    By the definition of $\z_\ell$ in \Cref{subEq:zk-wk}, for $\ell\geq 1$ there exists scalar coefficients $ \zeta^\ell_{1}, \cdots, \zeta^\ell_{\ell} \in \mathbb{R}$ such that $\z_\ell = y_{\ell+1} - \sum\limits_{j=1}^{\ell} \zeta_j^\ell z_{j-1}$. For $n_t=3$, we have that,
    \begin{align*}
        \w_3 &= S y_2 - S y_1 y_1^\dagger y_{2} =  S \left( y_2 - \Pi_{\mathcal{R}\left(\mathcal{Z}_0\right)} y_2 \right) = S \z_1 = \textstyle \sum\limits_{i=1}^{k_1} \rho_i \left[ \left(\lambda_i\right)^2 -  \zeta_1^1 \left(\lambda_i\right)  \right](\lambda_i-1) \v_i,
    \end{align*}
    where the last equality is due to \Cref{Eq: y_nt}.
    By choosing $\xi_1^2 = \zeta_1^1 $, we have shown that \cref{Eq:w_nt-proof} holds for $n_t=3$.
    Now suppose that \cref{Eq:w_nt-proof} holds for all $3,\dots, n_t-1$; it now suffices to show that this relation also holds for $n_t$. By substituting the hypothesis for $3,\dots, n_t-1$ into \cref{Eq:w_nt-proof},
     \begin{align*}
         \w_{n_t} 
         = & \textstyle S y_{n_t-1} - S \mathcal{Z}_{n_t-3} \mathcal{Z}_{n_t-3}^\dagger y_{n_t-1} =  \textstyle S \left( y_{n_t-1} -\Pi_{\mathcal{R}\left(\mathcal{Z}_{n_t-3}\right)} y_{n_t-1}\right) = S \z_{n_t-2} \\
         & = \textstyle \sum\limits_{i=1}^{k_1} \rho_i \left(\lambda_i\right)^{n_t-1} (\lambda_i-1)\v_i  - \sum\limits_{j=1}^{n_t-2} \zeta_j^{n_t-2} S\z_{j-1}
         \\
         & = \textstyle \sum\limits_{i=1}^{k_1} \rho_i \left(\lambda_i\right)^{n_t-1} (\lambda_i-1)\v_i  - \sum\limits_{j=1}^{n_t-2} \zeta_j^{n_t-2} 
         \sum\limits_{i=1}^{k_1} \rho_i \left[ \left(\lambda_i\right)^{j} - \sum\limits_{\ell=1}^{j-1} \xi_{\ell}^{j-1} \left(\lambda_i\right)^{\ell}  \right] (\lambda_i-1)\v_i
     \end{align*}
     Therefore, $\w_{n_t} 
         = \textstyle \sum\limits_{i=1}^{k_1} \rho_i \left[ \star \right] (\lambda_i-1) \v_i $
    where $\star$ replaces the expression,
    \begin{equation*}
         \textstyle  (\lambda_i)^{n_t-1} - \sum\limits_{j=1}^{n_t-2} \zeta_j^{n_t-2} (\lambda_i)^{j} + \sum\limits_{j=1}^{n_t-2} \zeta_j^{n_t-2} \sum\limits_{\ell=1}^{j-1} \xi_{\ell}^{j-1}  (\lambda_i)^{\ell} .
     \end{equation*}
    By appropriate choices of $ \xi^{n_t-1}_{1}, \cdots, \xi^{n_t-1}_{n_t-2} \in \mathbb{R}$, we can rewrite $\star = \left(\lambda_i\right)^{n_t-1} - \sum\limits_{j=1}^{n_t-2} \xi^{n_t-1}_j \left(\lambda_i\right)^{j}  $.
    This completes the proof of the claim in \cref{Eq:w_nt-proof} by induction. 
    Now, let $\widehat{\w} = \sum\limits_{\ell= 1}^{m} \gamma_{\ell-1} \w_\ell$ for some $\gamma_{\ell-1} \in \mathbb{C}$ and some $m\leq k_1$.
    Then, by substituting $w_\ell$ from (\ref{Eq:w_nt-proof}) and exchanging the sums over $\ell$ and $i$ we have, 
    \begin{gather*}
    	\textstyle \widehat{\w} =
    	\sum\limits_{i=1}^{k_1}  \rho_i \left[ \gamma_0 + \gamma_1(\lambda_i)  + \sum\limits_{\ell=3}^{m} \gamma_{\ell-1} \left[  \left(\lambda_i\right)^{\ell-1} - \sum\limits_{j=1}^{\ell-2} \xi^{\ell-1}_j \left(\lambda_i\right)^{j}   \right] \right]  (\lambda_i-1)\v_i .
    \end{gather*}
    Now, by exchanging the sums over $\ell$ and $j$, it follows that,
    \begin{gather*}
    \textstyle	\widehat{\w} = \sum\limits_{i=1}^{k_1} \rho_i  \Big[ \gamma_0 + \sum\limits_{j=1}^{m-2} \big[ \gamma_{j} - \sum\limits_{\ell=j+1}^{m-1} \gamma_\ell \xi_{j}^{\ell} \big] (\lambda_i)^j + \gamma_{m-1}(\lambda_i)^{m-1} \Big](\lambda_i-1)\v_i  .
    \end{gather*}
    Since $\{\v_i\}_1^{k_1}$ are eigenvectors associated with distinct eigenvalues ($\lambda_i\not=1$), they are linearly independent. 
    Thus, noting that $\rho_i \neq 0$ for all $i = 1, \cdots, k_1$, then $\widehat{w}= 0$ implies that,
    \begin{equation*}
    \textstyle
         \gamma_0 + \sum\limits_{j=1}^{m-2} \big[ \gamma_{j} - \sum\limits_{\ell=j+1}^{m-1} \gamma_\ell \xi_{j}^{\ell} \big] (\lambda_i)^j + \gamma_{m-1}(\lambda_i)^{m-1}  = 0 ,
    \end{equation*}
    for all $i = 1, \dots,k_1$. By rewriting the last two sets of equations in matrix form, we get
    \begin{subequations}
        \begin{equation}\label{Eq:gamma-zero}
               L_{k_1}^m(A)(I-\Xi)  \ogamma= 0,  \tag{\theparentequation}
        \end{equation}
        where 
        \small
        \begin{equation} \label{eq:xi-gamma-def}
        \textstyle 
            L_{k_1}^m(S) \coloneqq {\left(
                \begin{array}{*4{c}}
                1& \lambda_{1} & \cdots & (\lambda_{1})^{m-1} \\
                1& \lambda_{2} & \cdots & (\lambda_{2})^{m-1} \\
                \vdots& \vdots & \ddots & \vdots  \\
                1& \lambda_{k} & \cdots & (\lambda_{k})^{m-1} \\
                \end{array}\right)} \; , \quad
            \Xi \coloneqq \left(\begin{array}{*6c}
                0 & 0 & 0 & 0 & \dots & 0\\
                0 & 0 &  \xi_1^2 & \xi_1^3 & \cdots & \xi_1^{m-1}\\
                0 & 0 & 0 &   \xi_2^3 & \cdots & \xi_2^{m-1}       \\
                \vdots & \vdots & \vdots & \vdots & \ddots &  \vdots      \\
                0 & 0 & 0 &0  & \cdots & \xi_{m-2}^{m-1}  \\
                0&0&0&0&\cdots&0
        		\end{array}\right), \quad \ogamma \coloneqq \left(\begin{array}{*4c}
        		\gamma_0\\
                \gamma_1\\
                \\
                \vdots\\
                \\
                \gamma_{m-1}
        		\end{array}\right).
        \end{equation}
        \normalsize
    \end{subequations}
    Note that $I - \Xi$ is invertible by construction,
    and $L_{k_1}^m(S)$ has a specific structure that hints at its invertibility.
    In fact, $L_{\ell}^{\ell}(S)$ is the Vandermonde matrix formed by $\ell$ eigenvalues of $S$ which would be invertible if and only if $\lambda_1, \cdots, \lambda_\ell$ are distinct.
    More generally, $L_\ell^r(S)$, where $r\leq \ell$, has full column rank if $\{ \lambda_1, \cdots, \lambda_\ell \}$ consists of $\ell$ distinct eigenvalues. Hence for $m \leq k_1$, $L_{k_1}^m(S)$ has full column rank and \cref{Eq:gamma-zero} implies that $\ogamma =0$ and thus $\{\w_1, \dots, \w_m\}$ is a set of linearly independent vectors.
    This observation completes the proof as $m\leq k_1$ was chosen arbitrarily.
\end{proof}

\begin{remark}
    Note that if $\left(F, S\right)$ is observable and $\theta_0$ excites all modes of $S$, then it is trivial to show that all modes of $S$ can be recovered  from the informative data. 
\end{remark}

\subsection{Convergence Property of the DG-RAN Algorithm }
\label{subsec:convergence}

For a rapidly regularizable system with unknown uncertainty dynamic of the form \Cref{Eq:dynamic-theta}, \Cref{thm:tilde_K} provides an optimization problem for obtaining the sub-optimal data-driven feedback $z(t, \mathcal{D}_t)$ at each time $t=n_t \delta t$.

\begin{theorem}
\label{thm:tilde_K}
    Denote $\bar{Q}(t)$ and $\bar{Y}(t) = \mathbf{K}(t) \bar{Q}(t)$ as the solution variables to Problem \ref{Prob:funnel_synthesis}, for $t \in [t_0, t_f]$. Further, assume the uncertain dynamical system of the form \Cref{Eq:first-order-taylor}  is rapidly-regularizable as defined in \Cref{def:rapidly-regularizable}. Then at time $t=n_t \delta t$, for the variable estimate $\hat{\theta}(t) = (\hat{S}_{n_t})^{t/\delta t} \hat{\theta}_{n_t}^{\circ}$ derived from \Cref{Eq:estimate-theta-S}, there exists a data-driven feedback controller of the form $z(\tau, \mathcal{D}_t) = \tK(\tau) \hat\theta(t)$, a matrix-valued function of time $\widehat{M}(\tau)$ and scalars $\lambda(\tau)$, $\nu(\tau), \beta(\tau)$ that for all time $\tau \in [t,t_f]$ solves the following optimization problem. 

    \begin{subequations}
    \label{Eq:subeq-regularizability}
        \begin{align}
            \min_{{\tK}(\cdot), \tilde{M}(\cdot), \nu(\cdot), \beta(\cdot), \lambda(\cdot)}   \hspace{1.3cm} &\sigma(\cdot)   \tag{\theparentequation}\\
            \textbf{subject to }  \hspace{2cm} &\nonumber  \\
            & 0 < \beta,\; \;  0 < \nu,\; \;  0 < \lambda,\; \; \tilde{M} \in \mathcal{M}_{\phi, \Psi}, \\
            & \begin{bmatrix}
                \bar{Q}{A}^\intercal  + A \bar{Q} + T_2 &  \gamma E & \nu \tilde{\gamma} E  &[ \qquad T_3 &] \\
                 \gamma E^\intercal & -\beta I & 0 & \qquad 0 \quad 0 & \\
                 \nu \tilde{\gamma} E^\intercal & 0 & -\nu I  & \qquad 0 \quad 0 &\\
                * & * & *  &[ \qquad T_4 &]
            \end{bmatrix} \preceq 0  , \\
                & z(\tau,\mathcal{D}_t) + \xi(\tau) \in \mathcal{U}_{\mathcal{F}} ,
        \end{align}
    \end{subequations}
    where $$\small T_2 \coloneqq  \mathbf{Y}^\intercal B^\intercal + B \mathbf{Y} - \dot{\bar{Q}} + \lambda \bar{Q} + \beta \bar{Q}C_{cl}^\intercal C_{cl}\bar{Q} ,\;\, T_3 \coloneqq \left[
             \nu \tilde{\gamma} E , \, B {\tK} \hat{\theta} + (\delta t)^{-1} F \ln{(\hat{S}_{n_t})} \hat{\theta} , \, 0\right], \; \text{and} \;\, T_4 \coloneqq \begin{bmatrix}
            -\lambda \sigma &\hat{\theta}^\intercal {\tK}^\intercal G \\
            G {\tK} \hat{\theta} & -\nu I
        \end{bmatrix}.$$
        \normalsize
\end{theorem}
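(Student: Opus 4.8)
The plan is to prove feasibility of \Cref{Eq:subeq-regularizability} by exhibiting it as a \emph{certainty-equivalent} instance of \Cref{Thm:RRS}, with the true disturbance data $(\theta(t),\tilde S,\theta_0)$ replaced throughout by the estimates $(\hat\theta(t),\hat S_{n_t},\hat\theta_{n_t}^{\circ})$ produced by \Cref{Eq:estimate-theta-S}. The first step is to observe that the extrapolated estimate inherits exactly the structure assumed in \Cref{Thm:RRS}: for $\tau\ge t=n_t\delta t$ the signal $\hat\theta(\tau)=(\hat S_{n_t})^{\tau/\delta t}\hat\theta_{n_t}^{\circ}$ satisfies the linear time-invariant ODE $\dot{\hat\theta}(\tau)=\hat{\tilde{S}}\,\hat\theta(\tau)$ with the \emph{known} generator $\hat{\tilde{S}}\coloneqq(\delta t)^{-1}\ln(\hat S_{n_t})$ (principal logarithm; this is where we need $\hat S_{n_t}$ to lie in the image of $\exp$, which holds on the informative-data regime of \Cref{lem:algo-dyn}). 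Hence, over the residual horizon $[t,t_f]$, the pair $(F,\hat{\tilde{S}})$ together with $\hat\theta_{n_t}^{\circ}$ plays precisely the role of $(F,\tilde S,\theta_0)$ in the hypothesis of \Cref{Thm:RRS}.

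Second, I would rerun the derivation of \Cref{Thm:RRS} verbatim under the substitution $\theta\mapsto\hat\theta$, $\tilde S\mapsto\hat{\tilde{S}}$. Plugging $z(\tau,\mathcal D_t)=\tK(\tau)\hat\theta(\tau)$ into \Cref{Eq:first-order-taylor} and splitting $F\hat{\tilde{S}}\hat\theta$ into its $\mathcal R(B(\tau))$-component (cancelled by writing $z=-B^\dagger F\hat{\tilde{S}}\hat\theta+\mathbf{\hat{K}}\hat\theta$ with $\mathbf{\hat{K}}=\tK+B^\dagger F\hat{\tilde{S}}$) and its $\mathcal R(B(\tau))^\perp$-component reproduces the analogue of \Cref{Eq:proj-dynamic}. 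Taking the same Lyapunov function $\widehat V(\tau)=\hat\eta(\tau)^\intercal\bar Q^{-1}(\tau)\hat\eta(\tau)$ built from the funnel solution $\bar Q$ of Problem~\ref{Prob:funnel_synthesis}, imposing the decrease condition of \Cref{Eq:lyaponov-stability} together with the two structured-nonlinearity quadratic constraints (the fixed multiplier for $\delta p$ and a fresh $\widehat M\in\mathcal M_{\phi,\Psi}$ for $\delta\tilde p$), applying the S-procedure with the three nonnegative multipliers $\beta,\nu^{-1},\lambda$ and the same congruence transformation (scaling by $\bar Q$ and $\nu$) as in the proof of \Cref{Thm:RRS}, yields exactly the matrix inequality of \Cref{Eq:subeq-regularizability} with the stated $T_2,T_3,T_4$ --- block for block it coincides with that of \Cref{Eq:subeq-RapidR}, the only change being $\theta\mapsto\hat\theta$ and $F\tilde S\theta\mapsto(\delta t)^{-1}F\ln(\hat S_{n_t})\hat\theta$. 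Consequently, by the same equivalence that underlies \Cref{prop:RRS-property}, feasibility of \Cref{Eq:subeq-regularizability} amounts to the existence of a bounded $\sigma(\cdot)$ with $\|\hat\vartheta(\tau)\|^2<\sigma(\tau)$ on $[t,t_f]$, where $\hat\vartheta(\tau)=E\delta\tilde p(\tau)+B(\tau)\mathbf{\hat{K}}(\tau)\hat\theta(\tau)+\Pi_{\mathcal R(B(\tau))^\perp}F\hat{\tilde{S}}\hat\theta(\tau)$.

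Third, I would exhibit such a $\sigma(\cdot)$ explicitly, which finishes the proof. On the compact horizon $[t,t_f]$ the signal $\hat\theta(\cdot)$ is a fixed finite-dimensional matrix exponential, hence bounded; the funnel data $\bar Q(\cdot),\mathbf K(\cdot)$ and the system matrices $B(\cdot),F$ are continuous, hence bounded; and, by the local-Lipschitz-boundedness assumption on the higher-order term, $\delta\tilde p$ is bounded whenever its argument $\tilde q=G\mathbf{\hat{K}}\hat\theta$ ranges over the compact set induced by $\hat\eta\in\mathcal E_{\bar Q}$. Choosing, for instance, $\tK(\tau)=-B(\tau)^\dagger F\hat{\tilde{S}}$ so that $\mathbf{\hat{K}}\equiv0$ and the matched component is perfectly cancelled, $\hat\vartheta$ reduces to $E\delta\tilde p+\Pi_{\mathcal R(B)^\perp}F\hat{\tilde{S}}\hat\theta$, which is bounded over the horizon; taking $\sigma(\tau)\ge\|\hat\vartheta(\cdot)\|_{\Delta t}^2$ then satisfies every constraint of \Cref{Eq:subeq-regularizability}, and the minimization returns the sub-optimal $\sigma(\cdot)$ asserted by the theorem, while \Cref{prop:RRS-property} certifies that $\{\hat\eta:\hat\eta^\intercal\bar Q^{-1}\hat\eta\le\|\hat\vartheta(\cdot)\|_{\Delta t}^2\}$ is invariant and attractive for the certainty-equivalent closed loop.

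The main obstacle I anticipate is closing this last self-consistency loop: the Lipschitz estimate defining $\delta\tilde p$ (and hence the admissibility of $\widehat M\in\mathcal M_{\phi,\Psi}$) is only valid while $\hat\eta(\tau)\in\mathcal E_{\bar Q(\tau)}$, yet that invariance is itself the conclusion of the Lyapunov inequality being imposed --- so the argument must invoke the invariance-and-attractiveness reasoning of \Cref{prop:RRS-property} run on the \emph{estimated} closed loop, together with care that $\ln(\hat S_{n_t})$ is well defined and that the resulting bound on $\hat\vartheta$ is uniform over the fixed compact interval $[t,t_f]$. A secondary subtlety is that the hypothesis ``rapidly-regularizable'' (\Cref{def:rapidly-regularizable}) concerns the \emph{true} disturbance, whereas what the proof actually uses is that the certainty-equivalent problem inherits feasibility from the linear time-invariant structure of $\hat\theta$ and compactness of the horizon; one should remark that rapid-regularizability guarantees the funnel $\bar Q$ already leaves room for a corrective input, so the control-constraint $z+\xi\in\mathcal U_{\mathcal F}$ is not the binding obstruction and only the matrix inequality needs to be verified, as above.
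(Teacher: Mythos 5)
Your overall skeleton---treating \Cref{Eq:subeq-regularizability} as a certainty-equivalent instance of \Cref{Thm:RRS}, substituting $(\hat\theta,\hat{\tilde S})$ with $\hat{\tilde S}=(\delta t)^{-1}\ln(\hat S_{n_t})$ for $(\theta,\tilde S)$, and rerunning the Lyapunov / S-procedure / congruence derivation to land on the stated $T_2,T_3,T_4$---is the same route the paper takes, and your reading of the $T_3$ block is consistent with it. The decisive step, however, is handled differently, and your version of it has a gap. The paper's proof first uses \Cref{lem:algo-dyn} to derive the identity $\hat{\theta}(t) = \Pi_{\mathcal{R}(L_{k_1}^{n_t-1})}\,\theta(t)$: the certainty-equivalent disturbance is an orthogonal projection of the \emph{true} disturbance onto the subspace of modes identified so far (and equals it once $n_t=r_d+1$). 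This is precisely what lets the rapid-regularizability hypothesis---which is stated for the true $\theta$---be invoked for the estimate: the estimated disturbance is a ``part of'' the disturbance that \Cref{Thm:RRS} already certifies the funnel can tolerate, so feasibility of the LMI for $\hat\theta$ is inherited from feasibility for $\theta$.

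You flag this transfer as a ``secondary subtlety'' but then discharge it with a direct boundedness argument: choose $\tK=-B^\dagger F\hat{\tilde S}$, bound $\hat\vartheta$ on the compact horizon, and take $\sigma\ge\|\hat\vartheta(\cdot)\|_{\Delta t}^2$. That argument does not close, for the reason you yourself identify in your last paragraph. The multiplier $\widehat M\in\mathcal{M}_{\phi,\Psi}$ and the Lipschitz constants $\gamma,\tilde\gamma$ appearing in the LMI are \emph{local} certificates, valid only while $\eta$ remains in $\mathcal{E}_{\bar Q}$ (equivalently while $\tilde q$ ranges over $\Psi$); and a $\sigma$ chosen merely to dominate $\|\hat\vartheta(\cdot)\|_{\Delta t}^2$ need not be compatible with the funnel, in which case the invariant set certified by \Cref{prop:RRS-property} spills outside $\mathcal{E}_{\bar Q}$ and the multipliers cease to be valid there---so ``every constraint is satisfied'' is not established. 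It is exactly the rapid-regularizability hypothesis, routed through the projection identity $\hat\theta=\Pi_{\mathcal{R}(L_{k_1}^{n_t-1})}\theta$, that breaks this circularity in the paper: the admissible $\sigma_*$ guaranteed for the full disturbance also serves for its projection. Replacing your third step with that observation would align the argument with the paper's and remove the gap.
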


\begin{proof} 
    Considering Lemma \ref{lem:algo-dyn}, at time $t=n_t \delta t$ we can rewrite the estimation for $\hat{S}_{n_t}$ as
    \begin{align*}
        \hat{S}{n_t} &= S \mathcal{Z}_{n_t-2}\mathcal{Z}_{n_t-2}^\dagger = \sum_{i=1}^{k_1} \left[ \Sigma (\Sigma-I) L_{k_1}^{n_t-1}(I-\Xi_{n_t})\right]_{i} \v_i ,
    \end{align*}
    where $L_{k_1}^{n_t-1}$ is defined in \Cref{eq:xi-gamma-def} and we define $\Xi_{n_t}$ by removing the first column and row of $\Xi$ in \Cref{eq:xi-gamma-def}. Note that generally, if $\{\lambda_1, \, \cdots , \,\lambda_{k_1}\}$, the eigenvalues of $S$ corresponding to the excited modes $\v_1, \dots, \v_{k_1} \in \mathcal{R}(F^\intercal)$,  consists of $r_d$ distinct eigenvalues (where $r_d \leq k_1$), then $L_{k_1}^{r_d}$ has full column rank. Hence, we can recover all modes of $\Pi_{\mathcal{R}(F^\intercal)} S $ when  $n_t = r_d + 1 $, and for $n_t < r_d+1$ we have
    \begin{equation}
    \label{Eq:hat_thet}
        \hat{\theta}(t) = \Pi_{\mathcal{R}\left(L_{k_1}^{n_t-1} \right) } \theta(t)
    \end{equation}
    Since the system is rapidly-regularizable in terms of \Cref{Thm:RRS}, we can conclude that for any estimate of $\hat{\theta}(t)$ \Cref{Eq:hat_thet}, there exists a controller of the form $z(\tau, \mathcal{D}_t) = \tK(\tau) \hat\theta(t)$ and a Lyapunov function $V_{\hat{\theta}}(\tau) = \eta(\tau, \hat{\theta}(t))^\intercal   \bar{Q}^{-1}(\tau)  \eta(\tau, \hat{\theta}(t))$ that satisfies the following,
    \begin{equation}
    \label{Eq:lyaponov-convergence}
        \textstyle \dot{V}_{\hat{\theta}}(\tau)  \leq 0, \quad
        \text{for all }  \; \sigma(\tau) < V_{\hat{\theta}}(\tau),\quad \begin{bmatrix}
            \widehat{q}(\tau) \\
            \widehat{p}(\tau)
        \end{bmatrix}^{\intercal} \widehat{M}(\tau)\begin{bmatrix}
            \widehat{q}(\tau) \\
            \widehat{p}(\tau)
        \end{bmatrix} \geq 0
    \end{equation}
    where $\widehat{q}_i(t) \coloneqq G_i \tilde{K}(t)\hat{\theta}(t)$, and $\widehat{p}_i(t) \coloneqq \tilde{\phi}_i(\tilde{q}_i(t)) = \phi_i(\tilde{q}_i(t)+q_i(t))-\phi_i(q_i(t))$. Expanding this condition and following the same procedure as in the proof of \Cref{Thm:RRS} yields the final result.\footnote{This optimization problem contains Differential Matrix Inequalities. Refer to \cite{Reynolds2021FunnelProblem} for details of a numeric solution for a similar problem.}

\end{proof}

We conclude the section by providing the condition on $\delta t$ such that the \ac{DG-RAN} algorithm guarantees the invariance of $\mathcal{E}_{\bar{Q}}$.

\begin{theorem}
      \label{thm: convergence}
      Consider a rapidly-regularizable nonlinear dynamic system of the form \Cref{Eq:first-order-taylor}. If the length of the measurement intervals satisfies $\delta t \leq \ell_{\theta}$, where $\ell_{\theta}$ is defined in \Cref{Eq:dt-bound}, 
      then the \ac{DG-RAN} algorithm guarantees the invariance of the set described by $\bar{Q}(t)$. Meaning  $\forall \, t \in [t_0,t_f]: \; \eta(t) \in \mathcal{E}_{\bar{Q}_t}$.
      \begin{equation}
          \label{Eq:dt-bound}
          \ell_{\theta} = \frac{\Pi_{\mathcal{R}\left(L_{k_1}^{n_t-1}\right)} {\theta}(t) - \Pi_{\mathcal{R}\left(L_{k_1}^{n_t-2}\right)} {\theta}(t)}{\Pi_{\mathcal{R}\left(L_{k_1}^{n_t-1}\right)^{\perp}} \, \dot{\theta}(t) }
      \end{equation}
\end{theorem}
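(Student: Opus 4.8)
The plan is to combine Proposition~\ref{prop:RRS-property} with a sampled-data invariance argument over the measurement grid $\{n_t\delta t\}$. Read through Proposition~\ref{prop:RRS-property}, rapid-regularizability says that the funnel $\mathcal{E}_{\bar{Q}}$ is rendered invariant and attractive whenever the effective residual disturbance entering the $\eta$-dynamics has squared norm below the slack $\sigma^*(\cdot)$ carried by \Cref{Thm:RRS}. Under \ac{DG-RAN}, on each interval $[n_t\delta t,(n_t+1)\delta t)$ the applied controller is $z(\tau,\mathcal{D}_t)=\tK(\tau)\hat{\theta}(t)$, synthesized from \Cref{Eq:subeq-regularizability} using the \emph{frozen} projected estimate $\hat{\theta}(t)=\Pi_{\mathcal{R}(L_{k_1}^{n_t-1})}\theta(t)$ of \Cref{Eq:hat_thet}, in place of the true $\theta(\tau)$. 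So I would (i) write the exact closed-loop $\eta$-dynamics on each interval in the form of \Cref{Eq:proj-dynamic} but with $\theta$ replaced by the held estimate in the cancelling/feedback terms, collecting the uncancelled remainder into a signal $\vartheta_{\mathcal{D}_t}(\tau)=\vartheta(\tau)+\Delta(\tau)$ with $\Delta(\tau)$ the estimation-mismatch term; (ii) show $\|\vartheta_{\mathcal{D}_t}(\tau)\|^2<\sigma^*(\tau)$ on the whole interval provided $\delta t\le\ell_{\theta}$; and (iii) propagate invariance of $\mathcal{E}_{\bar{Q}}$ by induction on $n_t$.

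For step (i), $\Delta(\tau)$ has two sources: the (uncancellable) contribution of the part of $\theta$ outside $\mathcal{R}(L_{k_1}^{n_t-1})$, which shows up through $\Pi_{\mathcal{R}(B(\tau))^\perp}F\tilde{S}(\theta(\tau)-\hat{\theta}(t))$ and $B(\tau)\hat{\mathbf{K}}(\tau)(\theta(\tau)-\hat{\theta}(t))$, and the error in the disturbance-dynamics estimate $\tilde{S}-(\delta t)^{-1}\ln\hat{S}_{n_t}$, which enters through $T_3$ in \Cref{Eq:subeq-regularizability}. Using the local Lipschitz boundedness of the higher-order term $g$ and the local Lipschitz constant of $\tilde{\phi}$ over $\Psi$, exactly as in the proof of \Cref{Thm:RRS}, both are bounded affinely by $\|\theta(\tau)-\hat{\theta}(t)\|$ (modulo the quadrature error $\epsilon_{n_t}$ of Lemma~\ref{lem:algo-dyn}, which I would absorb into the residual). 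Hence the whole question reduces to bounding the intra-step mismatch $\|\theta(\tau)-\hat{\theta}(t)\|$ for $\tau\in[n_t\delta t,(n_t+1)\delta t)$.

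Step (ii) is the crux, and it is where $\ell_{\theta}$ comes from. On the interval the held estimate captures $\Pi_{\mathcal{R}(L_{k_1}^{n_t-1})}\theta$, and the best the algorithm could know at the end of the interval is one projector richer. Splitting $\theta(\tau)-\hat{\theta}(t)$ into (a) the component that remains outside $\mathcal{R}(L_{k_1}^{n_t-1})$, whose change over one step is at most $\delta t\,\|\Pi_{\mathcal{R}(L_{k_1}^{n_t-1})^\perp}\dot{\theta}(t)\|$ --- the denominator of \Cref{Eq:dt-bound} --- and (b) the freshly resolved component as the sample count advances, of size $\|\Pi_{\mathcal{R}(L_{k_1}^{n_t-1})}\theta(t)-\Pi_{\mathcal{R}(L_{k_1}^{n_t-2})}\theta(t)\|$ --- the numerator --- I would show that $\delta t\le\ell_{\theta}$ forces the drift (a) to stay dominated by the per-step information gain (b), so the mismatch, hence $\|\vartheta_{\mathcal{D}_t}(\tau)\|^2$, never exceeds the slack $\sigma^*(\tau)$; equivalently, it stays below the optimizer $\sigma(\tau)$ of \Cref{Eq:subeq-regularizability}, which is feasible by rapid-regularizability precisely because of this bound. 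The monotonicity $\mathcal{R}(L_{k_1}^{n_t-2})\subseteq\mathcal{R}(L_{k_1}^{n_t-1})$ and the Vandermonde structure of $L_{k_1}^{n_t-1}$ established in Lemma~\ref{lem:algo-dyn} are the tools for this splitting.

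Finally, for step (iii): on each interval, feasibility of \Cref{Eq:subeq-regularizability} yields $\dot{V}_{\hat{\theta}}(\tau)\le0$ whenever $V_{\hat{\theta}}(\tau)\ge\sigma(\tau)$, and since $\sigma(\tau)\le 1$ by the normalization of the funnel, $\mathcal{E}_{\bar{Q}(\tau)}$ is invariant over $[n_t\delta t,(n_t+1)\delta t)$; with the base case $\eta(t_0)\in\mathcal{E}_{\bar{Q}(t_0)}$ (the funnel contains the initialization) and the inductive hypothesis $\eta(n_t\delta t)\in\mathcal{E}_{\bar{Q}}$, we obtain $\eta(\tau)\in\mathcal{E}_{\bar{Q}(\tau)}$ throughout, in particular $\eta((n_t+1)\delta t)\in\mathcal{E}_{\bar{Q}}$, which closes the induction for all $t\in[t_0,t_f]$. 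I expect the main obstacle to be making step (ii) rigorous: pinning down the intra-step mismatch bound in terms of exactly the two quantities in \Cref{Eq:dt-bound}, tracking how the projector $\Pi_{\mathcal{R}(L_{k_1}^{n_t-1})}$ jumps between steps and how $\hat{S}_{n_t}$ --- hence $(\delta t)^{-1}\ln\hat{S}_{n_t}$ --- approximates $\tilde{S}$, together with an honest treatment of the quadrature error $\epsilon_{n_t}$ that the excerpt defers; keeping $\ell_{\theta}$ as a ratio of the two estimation quantities, rather than an explicit numerical tolerance, is the clean way to sidestep the latter.
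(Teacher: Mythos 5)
Your route is genuinely different from the paper's. The paper does not argue through Proposition~\ref{prop:RRS-property} or the residual--disturbance bound $\|\vartheta\|^2<\sigma^*$ at all; instead it introduces the composite adaptive-control Lyapunov function $V_c(t)=\eta^\intercal\bar{Q}^{-1}\eta+\tfrac{1}{2}\widetilde{\theta}^\intercal\Gamma^{-1}\widetilde{\theta}$, splits $\dot{V}_c$ into the term $\dot{V}_{\hat{\theta}}$ (nonpositive by \Cref{thm:tilde_K} on the set $\sigma(\tau)<V_{\hat{\theta}}(\tau)$) plus a remainder multiplying $\partial\widetilde{\theta}/\partial t$, and then computes explicitly $\partial\widetilde{\theta}/\partial t=\Pi_{\mathcal{R}(L_{k_1}^{n_t-1})^\perp}\dot{\theta}(t)-\tfrac{1}{\delta t}\bigl(\Pi_{\mathcal{R}(L_{k_1}^{n_t-1})}\theta(t)-\Pi_{\mathcal{R}(L_{k_1}^{n_t-2})}\theta(t)\bigr)$. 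The condition $\delta t\leq\ell_{\theta}$ is read off directly as the condition making that remainder nonpositive. You identified the same two quantities --- the unresolved drift and the per-step information gain --- and located them in the intra-step mismatch $\theta(\tau)-\hat{\theta}(t)$, so the engine is the same; but the paper's composite-Lyapunov formulation only needs the \emph{sign} of the $\widetilde{\theta}$-rate term, whereas your formulation buys a modular structure (per-interval perturbation bound plus induction) and an explicit inner invariant set $\mathcal{E}$ from \Cref{Eq:invariant-ellips}, at the price of needing an absolute bound on the mismatch.

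That price is where your step (ii) has a real gap. The hypothesis $\delta t\leq\ell_{\theta}$ is a \emph{relative} condition: it says the one-step drift of the unresolved component of $\theta$ is dominated by the one-step information gain, which at best yields monotone non-increase of the estimation error across sampling instants. It does not by itself deliver the \emph{absolute} bound $\|\vartheta_{\mathcal{D}_t}(\tau)\|^2<\sigma^*(\tau)$ that Proposition~\ref{prop:RRS-property} requires; for that you would additionally need a hypothesis on the initial mismatch (e.g.\ on $\|\Pi_{\mathcal{R}(L_{k_1}^{0})^\perp}\theta_0\|$) relative to the slack $\sigma^*$, which neither the theorem statement nor your sketch supplies. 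The paper sidesteps this precisely by folding $\widetilde{\theta}$ into $V_c$ with a free gain $\Gamma$, so the magnitude of the mismatch is never compared to $\sigma^*$ --- only the sign of its rate matters. To keep your route you must either add that initial-condition hypothesis or switch to the composite Lyapunov function. (A smaller point cutting both ways: $\ell_{\theta}$ in \Cref{Eq:dt-bound} is written as a ratio of vectors; both your argument and the paper's implicitly take norms, and either proof would be strengthened by making that explicit.)
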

\begin{proof}
    Denote $\bar{Q}(t)$ and $\mathbf{K}(t) = \bar{Q}^{-1}(t) \bar{Y}(t)$ as the solution variables to Problem \ref{Prob:funnel_synthesis}, for $t \in [t_0, t_f]$. Consider the Lyapunov function of the form
    \begin{equation}
        V_c(t) =  V(t) + \frac{1}{2}\widetilde{\theta}(t)^\intercal \Gamma^{-1} \widetilde{\theta}(t)
    \end{equation}
    where $V(t) = \eta(t)^\intercal \bar{Q}(t)^{-1} \eta(t)$ is the Lyapunov function for system \Cref{Eq:first-order-taylor} with input $\u(t) - \bar{\u}(t) =  \mathbf{K}(t)\eta(t) + \tK(t) \hat{\theta}(t)$, and $\tilde{\theta}(t)$ is defined in \Cref{Eq:sub-tilde-theta}. We need to prove there exist $\epsilon(t) \in \mathbb{R}$ such that $\dot{V}_c(t) \leq 0$ for any $\eta \in \mathcal{E}_{\bar{Q}(t)}$, where $\epsilon(t) \leq V_c(t)$. We have
    \begin{equation}
    \label{Eq:dotV_c}
        \dot{V}_c = \frac{\partial V}{\partial t} + \frac{\partial V}{\partial \eta} \dot{\eta} + \frac{\partial V}{\partial \hat{\theta}} \frac{\partial \widetilde{\theta}}{\partial t} + \widetilde{\theta}^\intercal \Gamma^{-1} \frac{\partial \widetilde{\theta}}{\partial t} = \underbrace{\dot{\eta}^\intercal \bar{Q}^{-1} \eta + \eta^\intercal \bar{Q}^{-1}  \dot{\eta} -\eta^\intercal \left(\bar{Q}^{-1} \dot{\bar{Q}} \bar{Q}^{-1} \right) \eta }_{ \dot{V}_{\hat{\theta}}(t)}+ \left(2\eta^\intercal \bar{Q}^{-1} F  + \frac{\partial V}{\partial \hat{\theta}}  + \widetilde{\theta}^\intercal \Gamma^{-1}\right) \frac{\partial \widetilde{\theta}}{\partial t} 
    \end{equation}
    where $\dot{\eta}(t) = \left( A(t)  + B(t) \mathbf{K}(t)\right)\eta(t)  + g\left(\x(t), \u(t)\right) + B(t) \mathbf{\tilde{K}}(t) \dot{\hat{\theta}}(t) + F\dot{\theta}(t)$. 

    From \Cref{thm:tilde_K}, for all time $\tau \in [t,t_f]$, there exists a data-driven feedback controller of the form $z(\tau, \mathcal{D}_t) = \tK(\tau) \hat\theta(\tau)$ such that
    $\textstyle \dot{V}_{\hat{\theta}}(\tau)  \leq 0$ for all $\sigma(\tau) < V_{\hat{\theta}}(\tau)$. Furthermore, for $t\in [n_t \delta t, (n_t+1)\delta t)$, we have
    \begin{subequations}
        \begin{equation}
        \label{Eq:partial_V_theta}
            \frac{\partial V(t)}{\partial \hat{\theta}}  = \frac{\partial}{\partial \hat{\theta}} \left(\int_{n_t \delta t}^{t} V(\tau) d\tau\right) = \int_{n_t \delta t}^{t} \eta(\tau)^{\top} \bar{Q}(\tau)^{-1}\left[B(\tau) \mathbf{\tilde{K}}(\tau)+\left(\frac{\partial g}{\partial u}\right) \mathbf{\tilde{K}}(\tau)\right] d \tau , 
        \end{equation}    
    and 
    \begin{equation}
        \label{Eq:partial_theta}
        \frac{\partial {\widetilde{\theta}}(t)}{\partial t} = \dot{\theta}(t) - \dot{\hat{\theta}}(t) = \dot{\theta}(t) - \Pi_{\mathcal{R}\left(L_{k_1}^{n_t-1}\right)} \dot{\theta}(t) -  \frac{1}{\delta t}\left( \Pi_{\mathcal{R}\left(L_{k_1}^{n_t-1}\right)} {\theta}(t) - \Pi_{\mathcal{R}\left(L_{k_1}^{n_t-2}\right)} {\theta}(t)\right)
    \end{equation}
    \end{subequations}
    where $\dot{\theta}(t) - \Pi_{\mathcal{R}\left(L_{k_1}^{n_t-1}\right)} \dot{\theta}(t) = \Pi_{\mathcal{R}\left(L_{k_1}^{n_t-1}\right)^{\perp}} \, \dot{\theta}(t)$. 
    For $t \in [0, t_f]$, let $\epsilon(t) = \sigma(\tau)$ where $\tau = t \in[t,t_f]$. By substituting \Cref{Eq:partial_V_theta} and \Cref{Eq:partial_theta} in \Cref{Eq:dotV_c}, the proof is complete. 
\end{proof}

\section{Numerical Example}
\label{sec: experiment}

This section provides a numerical example of the \ac{DG-RAN} algorithm applied to a 6-DOF powered descent guidance problem in the presence of time-varying atmospheric effects with an unknown dynamics, such as those encountered in planetary landing. This problem has nonlinear and highly unstable dynamics, state and control constraints, and large state and control dimensions that collectively make the problem challenging. In \cite{Reynolds2021FunnelProblem}, the performance of the funnel synthesis approach for the 6-DOF powered descent guidance problem in the absence of uncertainty was investigated. In the scenarios studied in \cite{Reynolds2021FunnelProblem} $\gamma-$iteration method was introduced to solve Problem \ref{Prob:funnel_synthesis}. 

\begin{table}[tb]
\caption{\label{tab:table1} The parameters for the 6-DOF powered descent case study.}
\centering
\begin{tabular}{lcccccc}
\hline
$Parameter$& & Value & & Parameter& & Value \\\hline
$J$& & $\textbf{diag}\{13600, 13600, 19150\} \, \mathrm{kg \, m}^2$  & & $\alpha_{\dot{m}}$ && $4.5324 \times 10^{-4} \mathrm{~s} / \mathrm{m}$  \\
$r_{F, \mathcal{B}}$ && $(0,0,-0.25) \mathrm{m}$ && $g_{\mathcal{I}}$ && $(0,0,-1.62) \mathrm{m} / \mathrm{s}^2$ \\
$F_{\min }$ && $5400 \mathrm{~N}$ & & $F_{\max }$ && $24750 \mathrm{~N}$ \\
$\delta_{\max }$ && $25 \mathrm{~deg}$ && $T_{\max }$ && $150 \mathrm{Nm}$ \\
$n_M$ && $ 25$ & & $N_s$ && $100$ $n_M$ \\
$\kappa_{t o l}$ && $0.5$ && $\alpha$ && $0.1 \mathrm{~s} $\\
\hline
\end{tabular}
\end{table}

The control input for the system under consideration involves a thrust vector generated by a single main engine and a reaction control system (RCS). The maneuver being performed is assumed to take place in close proximity to the landing site and over a relatively short duration, allowing for the approximation of gravity as a constant vector. Additionally, planetary rotation is neglected, and atmospheric effects are modeled as an unknown linear time-invariant (LTI) system described by \Cref{Eq:dynamic-theta}. It is assumed that the inertia matrix, center of mass, and center of pressure of the vehicle remain constant throughout the maneuver. 

To describe the vehicle's orientation, a surface-fixed landing frame denoted as $\mathcal{F}_\mathcal{I}$ is established. This frame has its origin at the intended landing site and is defined by the orthonormal vectors $\{x_\mathcal{I}, y_\mathcal{I}, z_\mathcal{I}\}$. The vectors $x_\mathcal{I}$, $y_\mathcal{I}$, and $z_\mathcal{I}$ represent the downrange, crossrange, and local up directions, respectively. Similarly, a body frame denoted as $\mathcal{F}_\mathcal{B}$ is defined with its origin at the vehicle's center of mass. The body frame is constructed using the orthonormal vectors $\{x_\mathcal{B}, y_\mathcal{B}, z_\mathcal{B}\}$, with $z_\mathcal{B}$ chosen to align with the vehicle's vertical axis. 

Due to the requirement of additive difference variables in \Cref{Eq:difference-variables-odd}, to represent the difference between the nominal state trajectory $\bar{\x}$ and the actual state trajectory $\x$, a 3-2-1 Euler angle sequence is employed to parameterize the orientation of $\mathcal{F}_{\mathcal{B}}$ with respect to $\mathcal{F}_{\mathcal{I}}$.

The state, control, and nonlinear dynamics for this problem are taken to be
\begin{equation}
    \x=\left[\begin{array}{c}
    m \\
    r_{\mathcal{I}} \\
    v_{\mathcal{I}} \\
    \Theta \\
    \omega_{\mathcal{B}}
    \end{array}\right], \quad \u=\left[\begin{array}{c}
    F_{\mathcal{B}} \\
    \tau_{\mathcal{B}}
    \end{array}\right], \quad \dot{\x}=f(\x, \u)=\left[\begin{array}{c}
    -\alpha_{\dot{m}}\left\|F_{\mathcal{B}}\right\|_2 \\
    v_{\mathcal{I}} \\
    \frac{1}{m}\left(C_{\mathcal{I} \mathcal{B}}(\Theta) F_{\mathcal{B}}\right)+g_{\mathcal{I}} \\
    T(\Theta) \omega_{\mathcal{B}} \\
    J^{-1}\left(\tau_{\mathcal{B}}+r_{u, \mathcal{B}}^{\times} F_{\mathcal{B}}-\omega_{\mathcal{B}}^{\times} J \omega_{\mathcal{B}}\right)
    \end{array}\right],
\end{equation}
where $r_{\mathcal{I}} \in \mathbb{R}^3$ is the inertial position vector, $v_{\mathcal{I}} \in \mathbb{R}^3$ is the inertial velocity vector, $\Theta=(\varphi, \theta, \psi) \in \mathbb{R}^3$ represents the Euler angles, and $\omega_{\mathcal{B}} \in \mathbb{R}^3$ is the angular velocity in the body frame. The controlled inputs are the engine thrust vector $F_{\mathcal{B}}$ and an RCS torque vector $\tau_{\mathcal{B}}$. Note that $n_x=13$ and $n_u=6$ for this problem. The true dynamics of the unknown disturbance is given by $\dot{\theta}(t) = \tilde{S} \theta(t)$ where, 
\begin{equation}
    \label{Eq:uncertainty dynamic}
    \begin{array}{cc}
         \tilde{S} &= \text{diag}\{0.001, \, -0.12, \, 0.03, \, 0.05, \, 0.07, \, 0.2, \, 0.0005\} \,  \\
         \\
        & \theta_0 = {\small 10^{-2}\begin{bmatrix}
         0.2 & 0.1 & 0.1 & 3 & 1 & 2 & 3
    \end{bmatrix}}^\intercal \normalsize 
    \end{array}
     \, , \qquad F ={\small \begin{bmatrix}
        0_{2\times 2} & 0_{2\times 2}  & 0_{2\times 3}\\
        I_{2} & 0_{2\times 2}  & 0_{2\times 3}\\
        0_{1\times 2} & 0_{1\times 2}  & 0_{1\times 3}\\
        0_{2\times 2} & I_2 & 0_{2\times 3}\\
        0_{4\times 2} & 0_{4\times 2} & 0_{4\times 3} \\
        0_{3\times 2} & 0_{3\times 2} & I_3 
    \end{bmatrix}} \normalsize
\end{equation}
and $F$ is the known basis function capturing the effects of the disturbance on the system dynamics. The disturbances in the system primarily affect the linear velocity and acceleration along the $x$-axis and $y$-axis, as well as the angular velocity in all three directions. These disturbances introduce variations and uncertainties in the motion of the system, making the control problem more challenging. By addressing the impact of these disturbances, our proposed approach aims to regulate the system states and mitigate their effects, ultimately improving the overall performance and stability of the system.

The nominal trajectory was determined using the PTR algorithm, as described in \cite{reynolds2020dual}, with specific boundary conditions and a final time of $t_f=29.7$ s. The initial conditions at $t_0$ were set as follows: $\bar{m}=3250$ kg, $\bar{r}_{\mathcal{I}}=(250,0,433)$ m, $\bar{v}_{\mathcal{I}}=(-35.7,0,-11.8)$ m/s, $\Theta=(0,59.8,0)$ degrees, and $\omega_{\mathcal{B}}=0_{3 \times 1}$. The final conditions at $t_f$ were set as: $\bar{m}=3130.3$ kg, $\bar{r}_{\mathcal{I}}=(0,0,30)$ m, $\bar{v}_{\mathcal{I}}=(0,0,-1)$ m/s, $\Theta=0_{3 \times 1}$, and $\omega_{\mathcal{B}}=0_{3 \times 1}$. 
%
%
The constraint set for the state space $\mathcal{X}$ is then taken to be

\begin{align}
    \mathcal{X} & =\left\{x \mid x_{l b} \leq x \leq x_{u b}, \delta x_{l b} \leq x-\bar{x} \leq \delta x_{u b},\|x\|_2 \leq \infty\right\} \, , \quad \mathcal{X}_f =\mathcal{E}_{Q_{\max , f}} \, , 
\end{align}
where $\mathcal{X}$ enforces both an absolute bound on the state vector as well as a bound on the deviation from the nominal trajectory.
\begin{align*}
\tiny
\textstyle x_{l b} & =-(-2100,150,150,0,40,40,30, \pi, \pi/2, \pi, 0.5,0.5,0.5) \\
\textstyle x_{u b} & =+(3737.7,350,300,500,30,30,5, \pi, \pi/2, \pi, 0.5,0.5,0.5) \\
\textstyle \delta x_{l b} & =-(\infty, 100,100,100, \infty, \infty, \infty, 2/9 \pi, 2/9 \pi, 2/9 \pi, 2/9 \pi, 2/9 \pi, 2/9 \pi), \\
\textstyle \delta x_{u b} & =+(\infty, 100,100,100, \infty, \infty, \infty, 2/9 \pi, 2/9 \pi, 2/9 \pi, 2/9 \pi, 2/9 \pi, 2/9 \pi),
\normalsize
\end{align*}
The terminal constraints specified by $\mathcal{X}_{\mathcal{F}}$ guarantee that every trajectory remains within a maximum deviation of $0.5$ m in position, $0.25$ m/s in velocity, and $3$ degrees in attitude and $3$ degrees/s in angular rate, from the nominal path at the final time. The additional information for this particular problem can be found in \Cref{tab:table1}, which is loosely based on a lander similar to the Apollo class. The matrices $A$ and $B$ represent the partial derivatives of $f$ along the nominal trajectory, while the parameters $H$, $G$, and $E$ are derived from a set of six $(n_p = 6)$ nonlinear channels for $\delta p$ and one $(n_{\tilde{p}} = 1)$ nonlinear channel  for $\delta \tilde{p}$.
\begin{align*}
    \small H=\left[\begin{array}{ccccc}
    I_1 & 0_{1 \times 3} & 0_{1 \times 3} & 0_{1 \times 3} & 0_{1 \times 3} \\
    0_{3 \times 1} & 0_{3 \times 3} & 0_{3 \times 3} & I_3 & 0_{3 \times 3} \\
    0_{3 \times 1} & 0_{3 \times 3} & 0_{3 \times 3} & I_3 & 0_{3 \times 3} \\
    0_{3 \times 1} & 0_{3 \times 3} & 0_{3 \times 3} & 0_{3 \times 3} & I_3 \\
    0_{3 \times 1} & 0_{3 \times 3} & 0_{3 \times 3} & 0_{3 \times 3} & I_3 \\
    0_{3 \times 3} & 0_{3 \times 3} & 0_{3 \times 3} & 0_{3 \times 3} & 0_{3 \times 3}
    \end{array}\right], \; G=\left[\begin{array}{cc}
    0_{1 \times 3} & 0_{1 \times 3} \\
    0_{3 \times 3} & 0_{3 \times 3} \\
    0_{3 \times 3} & 0_{3 \times 3} \\
    0_{3 \times 3} & 0_{3 \times 3} \\
    0_{3 \times 3} & 0_{3 \times 3} \\
    I_3 & 0_{3 \times 3}
    \end{array}\right], \; E=\left[\begin{array}{cccccc}
0_{1 \times 3} & 0_{1 \times 3} & 0_{1 \times 3} & 0_{1 \times 3} & 0_{1 \times 3} & I_1 \\
0_{3 \times 3} & 0_{3 \times 3} & 0_{3 \times 3} & 0_{3 \times 3} & 0_{3 \times 3} & 0_{1 \times 1} \\
I_3 & I_3 & 0_{3 \times 3} & 0_{3 \times 3} & 0_{3 \times 3} & 0_{1 \times 1} \\
0_{3 \times 3} & 0_{3 \times 3} & I_3 & I_3 & 0_{3 \times 3} & 0_{1 \times 1} \\
0_{3 \times 3} & 0_{3 \times 3} & 0_{3 \times 3} & 0_{3 \times 3} & I_3 & 0_{1 \times 1}
\end{array}\right]
\normalsize
\end{align*}

\begin{figure}[t!]
\centering
\subfloat[\label{fig:x-v-offline}position and linear velocity in state space $\mathcal{E}_Q$]{%
  \includegraphics[clip,width=.75\columnwidth]{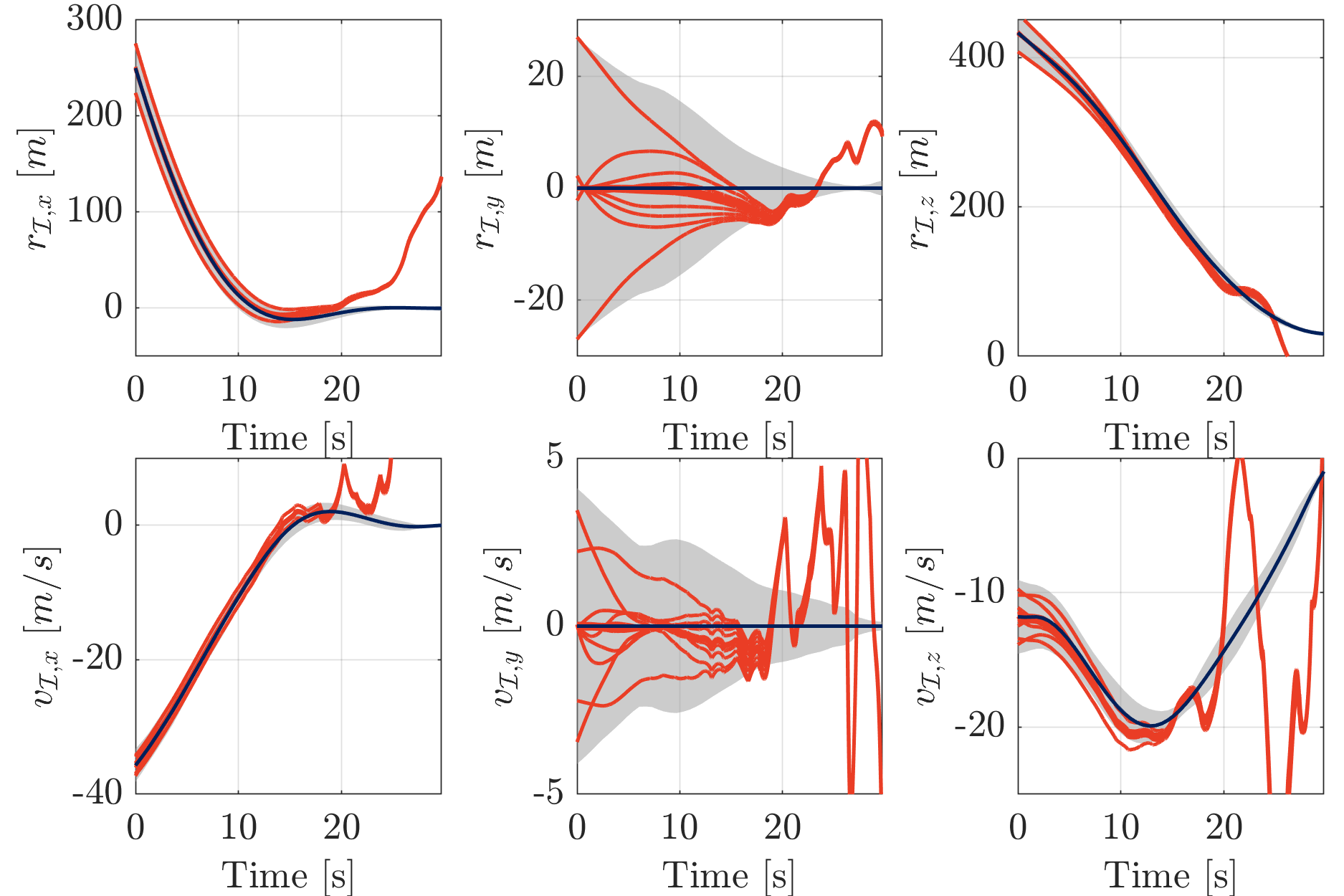}%
}

\subfloat[\label{fig:t-w-offline}rotation angles and angular velocity in state space $\mathcal{E}_Q$]{%
  \includegraphics[clip,width=.75\columnwidth]{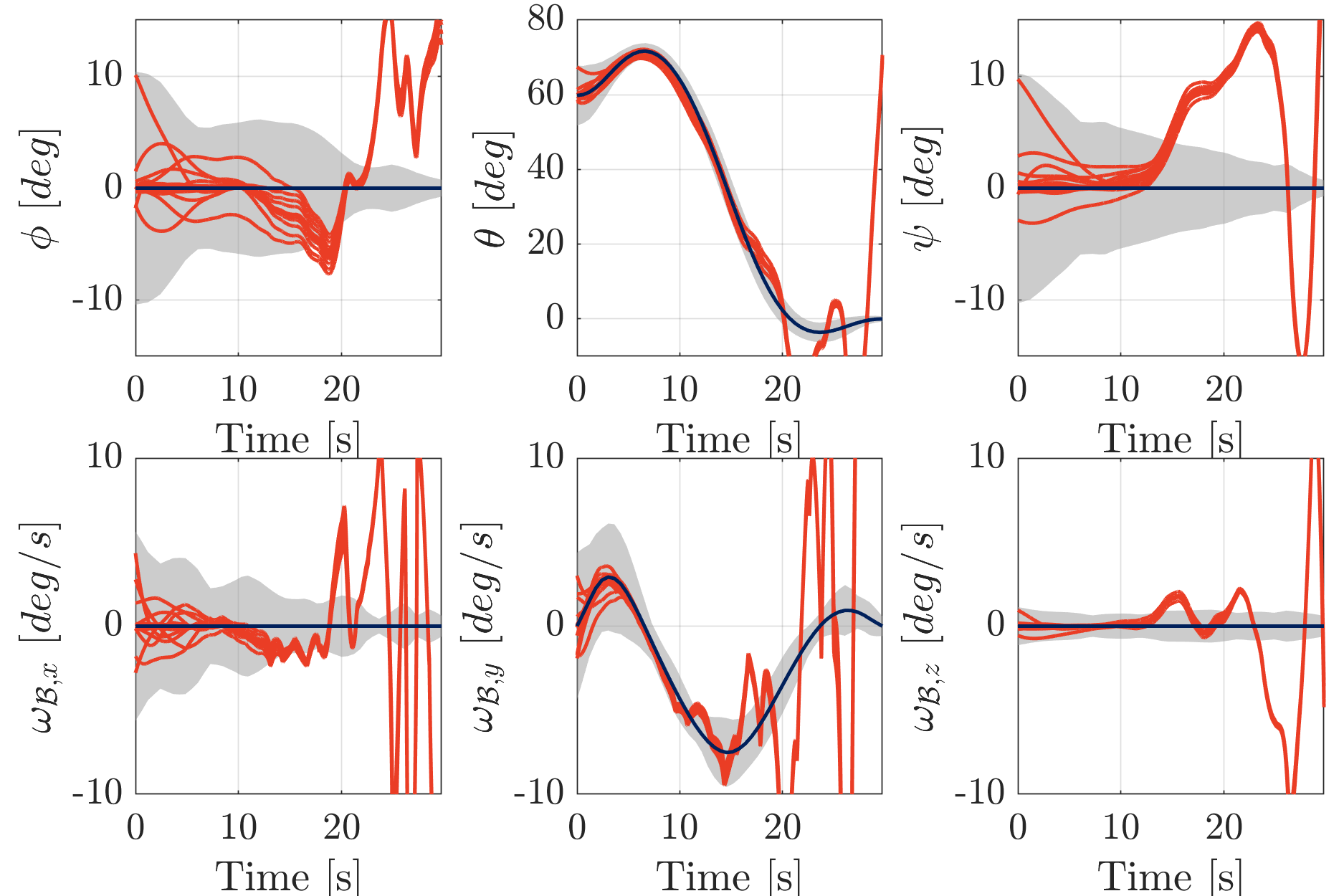}%
}

\caption{The quadratic funnel computed by the $\gamma$-iteration for the 6-DOF power descent problem in the presence of unmodeled time-varying disturbances. The initial condition of each test case was randomly sampled from the funnel entry, and the system uses only the offline robust controller introduced in Problem \ref{Prob:funnel_synthesis}.}
\label{fig:offline}
\end{figure}

\Cref{fig:offline} shows the trajectories of the system in \Cref{Eq:first-order-taylor} generated using only the offline robust controller. The ellipsoid $\mathcal{E}_{\bar{Q}}$ is projected onto each state dimension (mass is omitted) and depicted as the shaded grey area. The red trajectories correspond to test cases for which an initial condition was randomly (uniformly) selected from the ellipsoid $\mathcal{E}_{\bar{Q}\left(t_0\right)}$, and the nominal control and correction law $(\xi(t) = \mathbf{K}(t)\eta(t))$ were used to integrate the equations of motion numerically.
\begin{subequations}
    \begin{align}
     u(t) &=\bar{u}(t)+K(t)(x(t)-\bar{x}(t)) \label{subEq:u-offline} \\
     x(t) & =x\left(t_0\right)+\int_{t_0}^t \left(f(x(\tau), u(\tau)) + F \tilde{S}\theta(\tau) \right) \mathrm{d} \tau
    \end{align}
\end{subequations}
In the presence of unmodeled disturbance, the offline controller fails to keep the states inside the quadratic state funnel, which results in undesirable behavior of the system.

\begin{figure}[t!]
\centering
\subfloat[\label{fig:x-v-online}position and linear velocity in state space $\mathcal{E}_Q$]{%
  \includegraphics[clip,width=.75\columnwidth]{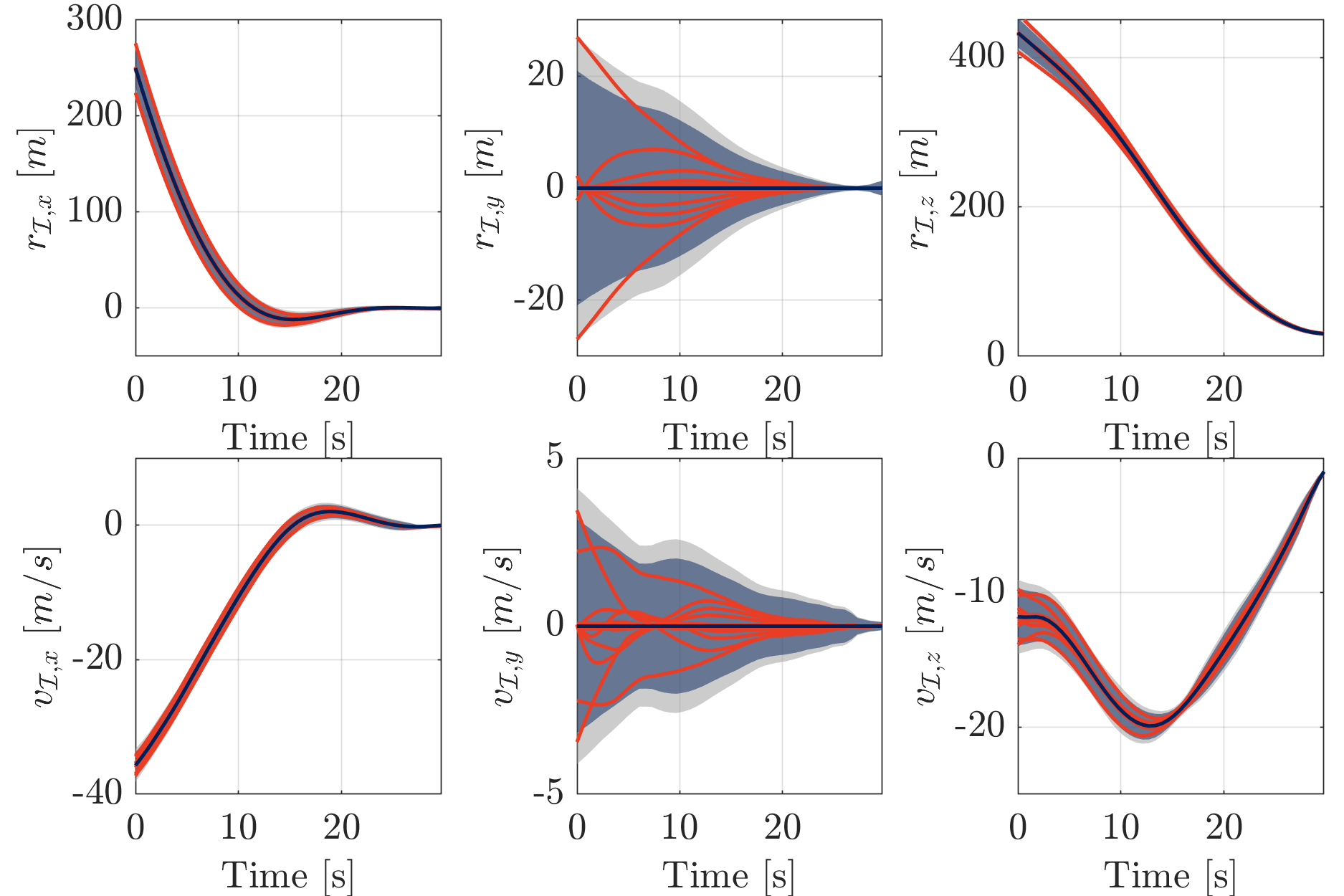}%
}

\subfloat[\label{fig:t-w-online}rotation angles and angular velocity in state space $\mathcal{E}_Q$]{%
  \includegraphics[clip,width=.75\columnwidth]{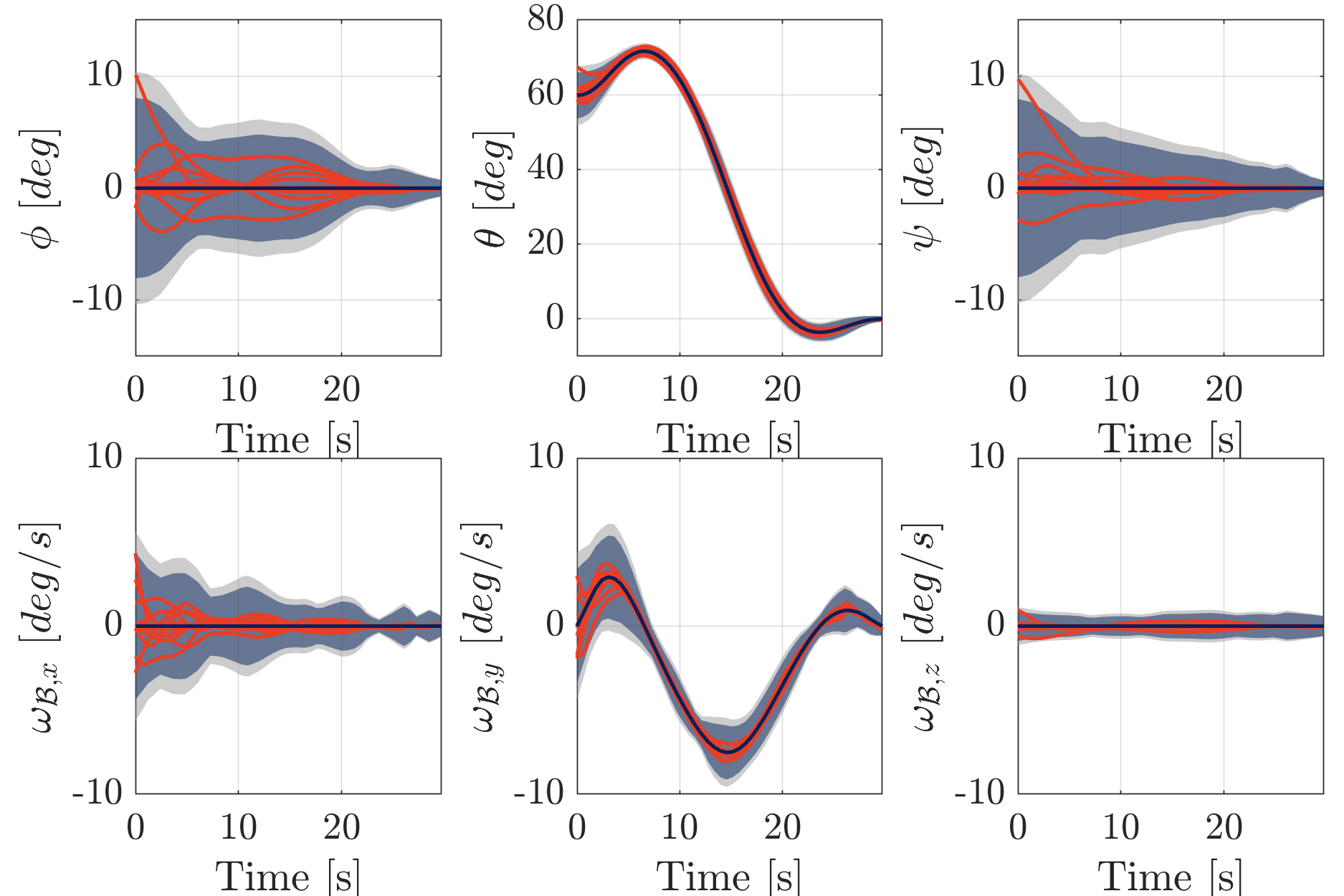}%
}

\caption{State trajectories generated using the online DG-RAN algorithm for the 6-DOF power descent problem in the presence of unmodeled time-varying disturbances. The initial condition for this test case was randomly sampled from the funnel entry, and the proposed method was implemented online for each test case. The shaded gray area represents $\mathcal{E}_{\bar{Q}(t)}$ projected onto each state, and the shaded dark blue area represents the projection of the inner invariant funnel $\mathcal{E}(t)$ onto each state.}
\label{fig:online}
\end{figure}

On the contrary, \Cref{fig:online} showcases the performance of the \ac{DG-RAN} algorithm. The ellipsoid $\mathcal{E}_{\bar{Q}}$ is projected onto each state dimension and depicted as the shaded grey area, where the shaded blue area demonstrates the projection of ellipsoid $\mathcal{E}$, \Cref{Eq:invariant-ellips} onto each state dimension, starting at time $t=n_t \delta t$ where $n_t=10$ and $\delta t = 0.5 \mathrm{~s}$. Note that \ac{DG-RAN} is an online algorithm where the proposed method was implemented uniquely for each trajectory to estimate the model of the disturbance dynamic and update $z(t,\mathcal{D}_t) = \tK(t) \hat{\theta}(t)$ at each measurable time $t= n_t \delta t$.
\begin{align}
 u(t) &=\bar{u}(t)+K(t)(x(t)-\bar{x}(t)) + \tK(t) \hat{\theta}(t)  \label{subEq:u-online}
\end{align}
The proposed method can identify all modes of disturbance dynamic affecting the system at time $t=5.5 \mathrm{~s}$ with a data set $\mathcal{D}_t$ of size $n_t=11$.

One important observation is how the \ac{DG-RAN} algorithm identifies the disturbance modes in the order of severity. Modes of the disturbance dynamic causing the trajectory to deviate from the nominal are quickly identified, and $z(t,\mathcal{D}_t)$ is updated to reflect such deviations. In particular, the data is informative enough to identify unknown parameters through least squares denoted by $\hat{S}_{n_t}$. Therefore, one stopping criterion–which is also used here–is the point where the estimate of parameters $\hat{S}_{n_t}$ has converged.
Then, $z(t,\mathcal{D}_t)$ is implemented without updates from $t=$ onward. In contrast to using only the offline controller \Cref{subEq:u-offline} (Fig aa), it is shown that the updated controller \Cref{subEq:u-online} is robust to the effects of environmental disturbances; since we now have a more accurate estimate of the disturbance dynamic using the data generated safely by DG-RAN in the loop.

These results are quite promising - the powered descent problem is a challenging problem with nonlinear dynamics, some nonlinear and nonconvex constraints, and relatively large state and control dimensions. What these results show is that in the presence of unmodeled disturbances, we are able to generate feasible trajectories (both dynamically and with respect to the constraints that were considered) for any initial condition in a set that stretches more than $35 \mathrm{~m}$ in every position direction, $2 \mathrm{~m} / \mathrm{s}$ in each velocity direction, $16 \mathrm{deg}$ in each Euler angle, and $2 \mathrm{deg} / \mathrm{s}$ in each angular velocity direction, all by using online updates.

\section{Conclusion}
\label{sec: conclusion}

In this paper, we addressed the problem of designing data-driven feedback controllers for complex nonlinear dynamical systems in the presence of unknown time-varying disturbances with LTI dynamics. Our goal was to achieve {\em finite-time regulation} of system states while ensuring the invariance of the state space funnel. We identified the limitations of existing adaptive control approaches and proposed a novel framework to overcome these challenges.

By expanding upon the notion of ``regularizability'' introduced in \cite{talebi2021regularizability}, we extended its applicability to a class of partially unknown nonlinear systems, where the time-varying disturbance dynamics was modeled as the unknown part of the system dynamics. We introduced the concept of rapid-regularizability and provided a characterization for a linear time-varying representation of the nonlinear system with locally-bounded higher-order terms. This characterization allowed us to analyze the spectral properties of the underlying system, providing insights into its behavior.

To address the problem of online regulation for partially unknown nonlinear systems, we proposed the \ac{DG-RAN} algorithm. This algorithm offered an iterative synthesis procedure that leveraged discrete time-series data from a single trajectory. By utilizing this data, the \ac{DG-RAN} algorithm achieved effective regulation of system states while simultaneously generating informative data for identifying the dynamics of the disturbance. This online approach was advantageous in terms of control synthesis efficiency and adaptability to changing environments.

In conclusion, our work contributed to the integration of learning, optimization, and control theory by addressing the limitations of existing methods and introducing a novel framework for designing adaptive controllers. The proposed approach enabled autonomous systems to adapt quickly to new environments while ensuring safety and control-theoretic guarantees. Our research opens up avenues for future exploration in online safety-critical control applications. Further investigations can focus on refining the \ac{DG-RAN} algorithm, extending the framework to handle more complex system dynamics, and conducting real-world experiments to validate its performance.


\section*{Acknowledgments}

\bibliography{main}

\end{document}